\documentclass{article} 
\usepackage[final]{colm2026_conference}

\usepackage{microtype}
\usepackage{hyperref}
\usepackage{url}
\usepackage{booktabs}

\definecolor{darkblue}{rgb}{0, 0, 0.5}
\hypersetup{colorlinks=true, citecolor=darkblue, linkcolor=darkblue, urlcolor=darkblue}

\usepackage{graphicx}
\usepackage{breqn}
\usepackage{multirow}
\usepackage{array}
\usepackage{subcaption}
\usepackage{algorithm}
\usepackage{algpseudocode}
\usepackage{enumitem}
\usepackage{nicefrac}
\usepackage{amsfonts}
\usepackage{amsmath}
\usepackage{amsthm}
\usepackage{tikz}
\usepackage{pifont}
\usepackage{wrapfig}
\newcommand{\cmark}{{\textcolor{green!70!black}{\ding{51}}}}
\newcommand{\xmark}{{\textcolor{red}{\ding{55}}}}

\newtheorem{theorem}{Theorem}
\newtheorem{lemma}{Lemma}

\usepackage{xcolor}

\title{OPERA: Online Data Pruning for Efficient Retrieval Model Adaptation}

\author{\textbf{Haoyang Fang, Shuai Zhang, Yifei Ma, Hengyi Wang,} \\
  \textbf{Cuixiong Hu, Katrin Kirchhoff, Bernie Wang, George Karypis} \\
  Amazon Web Services \\
  \texttt{\{haoyfang,shuaizs,yifeim,yuyawang\}@amazon.com}}

\let\cite\citep

\begin{document}

\maketitle

\begin{abstract}

Domain-specific finetuning is essential for dense retrievers, yet not all data pairs contribute equally to the learning process. We introduce OPERA\footnote{Code is released at \url{https://github.com/autogluon/autogluon-rag/tree/main/projects/opera}.}, a data pruning framework that exploits this heterogeneity to improve both the effectiveness and efficiency of retrieval model adaptation. We first investigate static pruning (SP), which retains only high-similarity query-document pairs, revealing an intrinsic quality-coverage tradeoff: ranking (NDCG) improves while retrieval (Recall) can degrade due to reduced query diversity. To resolve this tradeoff, we propose a two-stage dynamic pruning (DP) strategy that adaptively modulates sampling probabilities at both query and document levels throughout training, prioritizing high-quality examples while maintaining access to the full training set. Evaluations across eight datasets spanning six domains demonstrate the effectiveness of both approaches: SP improves ranking over standard finetuning (NDCG@10 +0.2 points), while DP achieves the strongest performance on both ranking (NDCG@10 +1.0 points) and retrieval (Recall@20 +0.4 points), with an average rank of 1.38 across all methods. These findings scale to Qwen3-Embedding, an LLM-based dense retriever, confirming architecture-agnostic benefits. Notably, DP reaches comparable performance in less than 50\% of the training time required by standard finetuning.

\end{abstract}

\section{Introduction}

Dense retrievers have advanced information retrieval~\cite{contriever, sbert, dpr, bge, gritlm, e5, nvemb}, substantially outperforming traditional sparse methods~\cite{bm25, tfidf}. Built on pretrained language models~\cite{transformer, bert, mistral}, these models achieve strong zero-shot performance across diverse benchmarks~\cite{beir, mteb}. Nevertheless, achieving optimal performance on specific downstream tasks still requires domain-specific finetuning~\cite{dadense, ulmfit}.

Data pruning and coreset selection have shown promises for improving training efficiency in neural networks~\cite{el2n, lessismore, beyond, glister}, while dynamic pruning methods have particularly demonstrated that adjusting data selection during training can maintain performance while reducing computation~\cite{infobatch, dpdiffusion, dpsample}. However, these methods are designed for standard classification or generation tasks where training samples are treated as independent, identically distributed instances. Dense retriever finetuning is fundamentally different: it employs a two-stage contrastive sampling framework~\cite{bge, bgem3, bgellm} where queries are first sampled, and then positive and negative documents are selected for each query. This hierarchical structure means that data quality operates at two distinct granularities (query relevance and document relevance), creating unique challenges that existing pruning methods do not address. To our knowledge, no prior work has studied data pruning specifically for dense retriever finetuning. A detailed discussion of related work is provided in Appendix~\ref{appendix:related_work}.

\begin{figure*}[t]
    \centering
    \begin{tikzpicture}
        \node[inner sep=0] (left) {\includegraphics[width=0.7\textwidth]{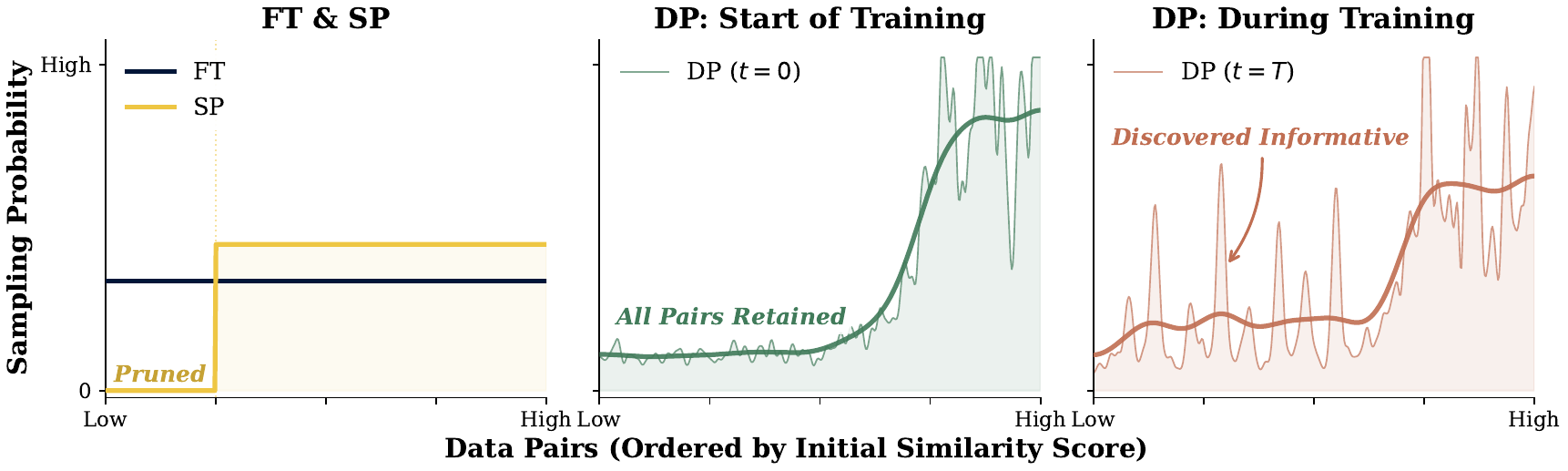}};
        \node[inner sep=0, anchor=west] (right) at ([xshift=0.3cm]left.east) {\includegraphics[width=0.27\textwidth]{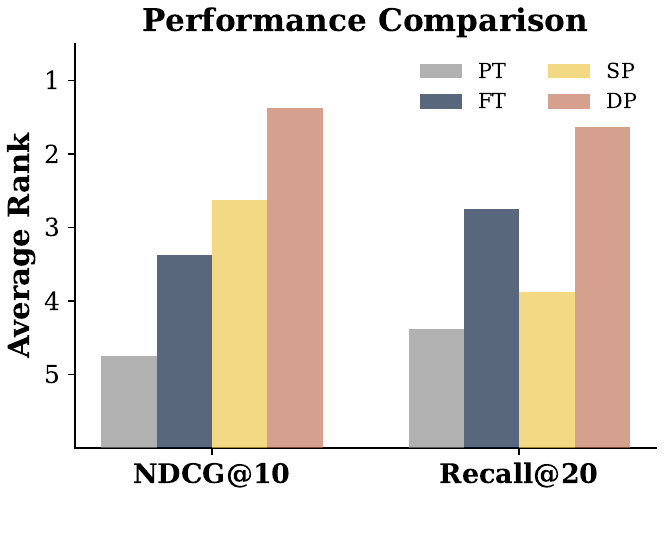}};
        \draw[->, thick, black!66, line width=1.2pt]
            ([yshift=-2pt]left.east) -- ([yshift=-2pt]right.west);
    \end{tikzpicture}
    \caption{\textbf{Comparison of sampling probability distributions across training strategies.} \textit{Left three panels:} Standard finetuning (FT) samples all data pairs uniformly, while static pruning (SP) discards the lowest-similarity ones and up-weights the rest, improving ranking but reducing query coverage. Dynamic pruning (DP) at the start of training retains all pairs with non-zero probability, assigning higher sampling weights to high-similarity pairs through soft thresholding. As training progresses, DP redistributes sampling weights: the model discovers informative examples in the low/mid similarity range and up-weights them. \textit{Right panel:} Average rank of each method on NDCG@10 and Recall@20 across 8 datasets (lower rank is better); DP achieves the best rank on both metrics. See Section~\ref{subsec:sampling_vis} for empirical sampling weight evolution on real data.}
    \label{fig:teaser}
\end{figure*}

We introduce OPERA, a framework that exploits the heterogeneous quality of training data~\cite{d4, semdedup} to improve domain adaptation for dense retrievers (Figure~\ref{fig:teaser}). Our investigation begins with static pruning (SP), which retains only the highest-similarity query-document pairs for training. This simple strategy reveals a key insight: quality-based filtering consistently improves ranking metrics (NDCG) but can degrade retrieval coverage (Recall), because pruning disproportionately removes queries with fewer high-quality documents, breaking the balanced sampling that retrievers rely on for broad coverage. This quality-coverage tradeoff is intrinsic to the two-stage sampling structure of retrieval training and motivates the need for a more nuanced approach.

To resolve this tradeoff, we propose dynamic pruning (DP), which maintains the complete training set while adaptively adjusting sampling probabilities at both query and document levels. Unlike InfoBatch~\cite{infobatch}, which uses a fixed loss-average threshold and rescales gradients to maintain unbiasedness, our approach implements dynamic thresholds that evolve throughout training via cosine scheduling, and preserves original learning rates to emphasize high-quality training signals. Rather than employing hard exclusions, DP assigns reduced but nonzero sampling probabilities to lower-quality examples, ensuring continued data diversity while prioritizing informative instances.

Concretely, our dynamic pruning framework combines three components: (1) hierarchical pruning at both query and document granularities, reflecting the two-stage sampling structure; (2) dynamic threshold scheduling that progressively sharpens selection as model representations improve; and (3) soft pruning mechanisms that modulate sampling probabilities rather than discarding data.

\vspace{-1pt}
We evaluate OPERA on eight datasets spanning nutrition~\cite{nfcorpus, beir}, medicine~\cite{tripclick}, finance~\cite{fiqa, beir}, non-factoid QA~\cite{antique}, factoid QA~\cite{triviaqa, dpr, hotpotqa, beir}, and fact verification~\cite{fever, beir}, including both datasets seen and unseen during pretraining. Our contributions are:
\vspace{-1pt}
\begin{itemize}
    \item We identify a quality-coverage tradeoff unique to retrieval's two-stage sampling: static quality filtering improves ranking but degrades recall. This finding holds across encoder-only (BGE) and LLM-based (Qwen3-Embedding) retrievers.

    \item We propose dynamic pruning with hierarchical scheduling that resolves this tradeoff, improving both ranking and recall while halving convergence time. We provide an efficient formulation compatible with fixed-iteration training frameworks.

    \item We validate OPERA across 8 datasets, 6 domains, and 2 architectures (encoder-only BGE and decoder-based Qwen3-Embedding). We further demonstrate effective denoising through a combined SP+DP strategy, and provide theoretical guarantees on when pruning outperforms standard finetuning.
\end{itemize}

\section{Methodology}

\subsection{Preliminary: Standard Finetuning (FT)}
Given a dataset $(Q,D)$ with $n$ queries where each query $q \in Q$ has $m_q$ positive documents from the document corpus $D$, we adopt the standard two-stage contrastive sampling framework for dense retriever finetuning~\citep{bge}, in which queries are sampled first and positive/negative documents are then sampled per query. For each training step $t$, we first uniformly sample queries from $Q$. Then, for each sampled query, we randomly select one positive document $d$ from its $m_q$ positive documents and one negative document $d'$ through hard negative mining (HNM) for contrastive loss computation~\citep{bge}. Our pruning framework operates on this two-stage structure, modulating the query- and document-level sampling probabilities rather than the contrastive objective itself.

The sampling probabilities are:
\begin{equation}
P_t(q) = \nicefrac{1}{n}, \quad P_t(d | q) = \nicefrac{1}{m_q}, \quad P_t(d'|q) = HNM(D,q)
\end{equation}

For notational simplicity, we omit the time step $t$ and $P(d'|q)$ for the following equations.

\subsection{Static Pruning (SP)}
\label{subsec:sp}

To understand how data quality affects retrieval finetuning, we begin with a straightforward approach inspired by~\citet{beyond}: retaining only the highest-quality training pairs. We compute cosine similarity between each query-document pair using the pretrained model and retain only the top fraction of pairs for training.
We define $D(q)$ as the set of positive documents for query $q$, and use indicator variables $I_q$ and $I_{q,d}$ to denote whether a query or query-document pair is kept by the SP algorithm:
\begin{equation}
I_q, I_{q,d} =
\begin{cases}
1, & \text{if kept} \\
0, & \text{if pruned}
\end{cases}
\end{equation}
This adjusts our sampling probabilities to:
\begin{equation}
P(q) = \frac{\sum_{d \in D(q)} I_{q,d}}{\sum_{q \in Q} \sum_{d \in D(q)} I_{q,d}}, \quad P(d|q) = \frac{I_{q,d}}{\sum_{d \in D(q)} I_{q,d}},
\end{equation}
where $Q$ is the set of all queries.
In practice, we select top $k\sum_{q \in Q} m_q$ query-document pairs with highest similarity scores for training, where $k \in [0,1]$ is the data retention rate:
\begin{equation}
k = \frac{\sum_{q \in Q} \sum_{d \in D(q)} I_{q,d}}{\sum_{q \in Q} m_q}
\end{equation}
Given the retention rate $k$, the indicators are set deterministically: we rank all query-document pairs by similarity and set $I_{q,d}=1$ for the top-$k$ fraction and $I_{q,d}=0$ otherwise, and set $I_q=1$ if and only if at least one document of query $q$ is retained (i.e., $\sum_{d \in D(q)} I_{q,d} > 0$). A query is thus dropped entirely when all of its positive pairs fall below the cutoff.

\vspace{-4pt}

Crucially, quality-based filtering breaks the uniform query sampling that standard finetuning relies on: queries with more high-similarity documents are overrepresented, while queries with fewer such documents may be excluded entirely. This improves ranking (e.g., NDCG) by focusing on well-matched pairs, but can degrade recall by reducing coverage of the query space. This quality-coverage tradeoff is intrinsic to the two-stage sampling structure of retrieval training and motivates the dynamic approach described in Section~\ref{subsec:dp}.

\textbf{Alternative scoring metric.}
We also evaluate consistency-based scoring (CBS)~\cite{e5}, which ranks each positive pair against random negatives and retains pairs that consistently rank highly, adapted to our finetuning setting. CBS and cosine similarity yield comparable results (Appendix~\ref{appendix:cbs}), but CBS is incompatible with DP as it requires re-embedding all negative documents after each model update. We therefore adopt cosine similarity as the default metric for both SP and DP.

\subsection{Dynamic Pruning (DP)}
\label{subsec:dp}
\begin{algorithm}[h!]
\caption{Dynamic Pruning}
\label{alg:dp}
\begin{minipage}[t]{0.46\textwidth}
\begin{algorithmic}
\Procedure{OPERA}{}
\State $\alpha_s \gets \text{query strength start}$
\State $\alpha_e \gets \text{query strength end}$
\State $r_s \gets \text{query ratio start}$
\State $\beta_s \gets \text{doc strength start}$
\State $\beta_e \gets \text{doc strength end}$
\State $v_s \gets \text{doc ratio start}$
\State $v_e \gets \text{doc ratio end}$
\State $n \gets \text{size of dataset}$
\State $t_{max} \gets \text{max training steps}$
\State $n_0 = \lfloor n \cdot (1 - r_s) / \alpha_s + r_s \cdot n \rfloor$
\For{each iteration}
\State $t \gets \text{current training step}$
\State Update $\mathbf{q}$ \Comment{Query scores}
\State Update $\mathbf{p}$ \Comment{Pos.\ scores}
\State $\textsc{SampleQuery}()$
\State $\textsc{SampleDocument}()$
\EndFor
\EndProcedure
\end{algorithmic}
\end{minipage}%
\hfill
\begin{minipage}[t]{0.52\textwidth}
\begin{algorithmic}
\Procedure{SampleQuery}{}
\State $\alpha \gets \alpha_e + \frac{(1 + \cos(\frac{t}{t_{max}} \pi)) \cdot (\alpha_s - \alpha_e)}{2}$
\State $r \gets \frac{\alpha \cdot n_0 - n}{(\alpha - 1) \cdot n}$
\State $q_{top} \gets \text{Top $r$ Queries}$
\State $q_{rem} \gets queries \setminus q_{top}$
\State $q_{rand} \gets \text{Random Select $n_0 - r$ from $q_{rem}$}$
\State $q' \gets q_{top} \cup q_{rand}$
\State \text{Sample from $q'$}
\EndProcedure
\Procedure{SampleDocument}{}
\State $\beta \gets \beta_e + \frac{(1 + \cos(\frac{t}{t_{max}} \pi)) \cdot (\beta_s - \beta_e)}{2}$
\State $v \gets v_e + \frac{(1 + \cos(\frac{t}{t_{max}} \pi)) \cdot (v_s - v_e)}{2}$
\State Calculate the threshold $T$ at cutoff $v$
\State $w = (\mathbf{p} > T) \cdot (\beta - 1) + 1$
\State $w = normalize(w)$
\State \text{Sample documents with weights $w$}
\EndProcedure
\end{algorithmic}
\end{minipage}
\end{algorithm}

The quality-coverage tradeoff identified in SP arises because hard pruning permanently discards data, reducing the diversity of training signals. Dynamic Pruning (DP) resolves this tradeoff by replacing hard exclusions with soft sampling modulation: high-quality examples receive elevated sampling probabilities, while lower-quality examples remain accessible with reduced frequencies. This preserves the broad data coverage needed for recall while concentrating training effort on informative instances.

As detailed in Algorithm~\ref{alg:dp}, DP uses two scores updated from the model's forward pass at each training step: the positive score $\mathbf{p}_i = s(e_q, e_{d_i})$, the cosine similarity between query and positive document embeddings, and the query score $\mathbf{q}_j$, the mean contrastive loss across query $j$'s positive pairs (offset to ensure positivity). These drive a two-level sampling process: at the query level, top-scoring queries are combined with randomly selected low-scoring queries to maintain diversity; at the document level, sampling weights are assigned based on positive scores while ensuring all documents retain non-zero selection probability. Concretely, in \textsc{SampleDocument}, $\beta$ is the current emphasis strength and $v$ is a quality-threshold percentile that broadens over training (from $v_s{=}0.25$ to $v_e{=}0.5$); we compute the score threshold $T$ at cutoff $v$ and assign each document a weight $w = (\mathbf{p} > T)\cdot(\beta-1)+1$, so that above-threshold (high-quality) documents receive weight $\beta$ and the rest receive weight $1$. After normalization, high-quality documents are sampled $\beta$ times more often than the remaining ones, yet no document is ever assigned zero probability, preserving coverage.

The sampling probabilities for queries and query-document pairs are defined as:
\begin{equation}
P(q) =
\begin{cases}
\frac{1}{(1+\alpha)n - n_0}, & \text{if low quality} \\
\frac{\alpha}{(1+\alpha)n - n_0}, & \text{if high quality}
\end{cases}
\quad,\quad
P(d|q) =
\begin{cases}
\frac{1}{[1-(1-\beta)r_q]m_q} & \text{if low quality} \\
\frac{\beta}{[1-(1-\beta)r_q]m_q}, & \text{if high quality}
\end{cases}
\end{equation}
Here, $n_0 = \lfloor n \cdot (1 - r_s) / \alpha_s + r_s \cdot n \rfloor$ represents a fixed virtual dataset size, ensuring compatibility with training frameworks that require a static dataset size. It is chosen so that the expected number of queries sampled per epoch stays fixed regardless of the sampling strength $\alpha$. $r_q$ is the fraction of high-quality positive pairs for each query. $\alpha$ and $\beta$ are sampling strength parameters that vary during training:
\begin{equation}
\alpha(t) = \alpha_e + \frac{(1 + \cos(\frac{t}{t_{max}} \pi)) \cdot (\alpha_s - \alpha_e)}{2}
\end{equation}
where $\alpha_s$ and $\alpha_e$ denote the start and end sampling strengths respectively, and $t_{max}$ is the maximum training steps. Similarly, $\beta$ follows the same cosine decay schedule. Within a virtual pool of size $n_0$, the $r$ top-scoring queries are always included ($P(\text{top in pool})=1$) while the remaining $n_0 - rn$ slots are filled by uniform selection from the $(1-r)n$ low-scoring queries ($P(\text{remaining in pool}) = (n_0 - rn)/((1-r)n)$). Setting the ratio of these inclusion probabilities equal to the sampling strength $\alpha$ and solving yields $r = (\alpha \cdot n_0 - n)/((\alpha-1)\cdot n)$.

\textbf{Intuition behind the schedule.}
The schedule reflects how much the model's own quality scores can be trusted over the course of training. Early on, near initialization, these scores are unreliable, so DP uses weak selectivity ($\alpha_s{=}2$) and samples broadly and near-uniformly; aggressive selection here would merely amplify noise. As training progresses and the scoring signal improves, DP increases selectivity ($\alpha_e{=}5$) to concentrate on high-quality examples. This is the opposite of standard curriculum learning, which starts with easy examples and gradually broadens: DP instead starts broad and progressively \emph{sharpens} its focus toward high-quality pairs.

\textbf{Update Interval.}
To mitigate the computational overhead associated with frequent score updates and pruning operations, we introduce an update interval parameter $I_u$. We update the query scores $\mathbf{q}$ every $I_u$ iterations, while maintaining the per-iteration random selection of $d$ due to its negligible computational cost. Our empirical analysis demonstrates that this optimization reduces additional computation time from 4.5\% to 1.64\% over the baseline, without impacting performance. The effects of varying $I_u$ are examined in Section~\ref{section:overhead}.

\textbf{Comparison with Adaptive Training Methods.}
\label{subsec:adaptive_training}
Table~\ref{tab:adaptive_comparison} compares DP with related adaptive training paradigms (see Appendix~\ref{appendix:related_work} for a broader discussion). Curriculum learning~\cite{cl}, SPL~\cite{clsp}, and active learning all rely on scores computed before or between training stages rather than \textit{continuous scoring} from the current model weights, and permanently exclude samples rather than maintaining \textit{full coverage}. Additionally, curriculum learning follows a predetermined ordering that is not \textit{adaptive} to model state, while SPL and active learning require expensive scoring or labeling passes. InfoBatch addresses these limitations, but applies uniform sample-level pruning without \textit{hierarchical} sampling or \textit{scheduled} strength. DP performs \textit{lightweight} \textit{adaptive} online update with \textit{continuous scoring} and \textit{full data coverage}, uses a \textit{scheduled} sampling strength, and operates at two \textit{hierarchical} levels reflecting the query-document structure of retrieval training.

\begin{table}[h]
\vspace{-5pt}
\centering
\scalebox{0.78}{
\begin{tabular}{@{}lcccccc@{}}
\toprule
 & \textbf{Lightweight} & \textbf{Adaptive} & \textbf{Continuous Scoring} & \textbf{Full Coverage} & \textbf{Hierarchical} & \textbf{Scheduled} \\
\midrule
Curriculum & \cmark & \xmark & \xmark & \xmark & \xmark & \xmark \\
SPL & \xmark & \cmark & \xmark & \xmark & \xmark & \xmark \\
Active Learning & \xmark & \cmark & \xmark & \xmark & \xmark & \xmark \\
InfoBatch & \cmark & \cmark & \cmark & \cmark & \xmark & \xmark \\
\textbf{DP (ours)} & \cmark & \cmark & \cmark & \cmark & \cmark & \cmark \\
\bottomrule
\end{tabular}
}
\caption{Comparison of adaptive training paradigms.}
\label{tab:adaptive_comparison}
\vspace{-5pt}
\end{table}

\textbf{When Does Pruning Help?}
We provide a formal analysis of the conditions under which data pruning outperforms standard finetuning (Appendix~\ref{appendix:proof}). The key result (Theorem~\ref{theorem:denoise}) shows that any scoring function identifying true positives at a rate higher than the noise rate will improve learned query representations. Under equal selection quality, SP outperforms DP because it completely excludes noisy samples rather than merely down-weighting them. However, when signal quality is low, SP's hard removal reduces the training pool while retaining noise, increasing per-sample exposure to mislabeled examples and the risk of overfitting. DP avoids this by keeping all samples accessible with non-zero probability and increasing its sampling strength $\alpha$ over training as the model improves, becoming more selective when the signal is more reliable. This may explain why DP outperforms SP when initial signal is not robust. This motivates a two-stage approach: applying SP first to discard strong noise, followed by DP to refine. The empirical validation is presented in Section~\ref{section:denoising}.

\section{Experiments}

To evaluate our proposed OPERA approach, we design a series of experiments addressing six key research questions:

\begin{itemize}
    \item \textbf{RQ1:} How does OPERA compare to FT and other pruning methods across domains?

    \item \textbf{RQ2:} Can OPERA's findings scale to LLM-based dense retrievers?

    \item \textbf{RQ3:} How effective is OPERA in handling noisy training data?

    \item \textbf{RQ4:} How does OPERA affect convergence speed and training efficiency?

    \item \textbf{RQ5:} What is the computational overhead of OPERA, and how can it be optimized?

    \item \textbf{RQ6:} How does DP allocate training focus across examples over time?
\end{itemize}

We evaluate our methods on eight datasets spanning six domains: NFCorpus~\cite{nfcorpus} (nutrition), TripClick head/torso~\cite{tripclick} (medical), FiQA~\cite{fiqa} (finance), ANTIQUE~\cite{antique} (non-factoid QA), TriviaQA~\cite{triviaqa} and HotpotQA~\cite{hotpotqa} (factoid QA), and FEVER~\cite{fever} (fact verification). FEVER and HotpotQA were seen during bge-large-en-v1.5 pretraining~\cite{bge}, while the others are unseen. See Appendix~\ref{appendix:dataset} for more datasets statistics.

\subsection{Implementation Details}
\label{sec:app_imp}

We used the bge-large-en-v1.5 model~\cite{bge} (335M parameters) as our primary dense retriever. All hyperparameters were selected to optimize the FT baseline performance and used without any additional tuning for all other methods, ensuring a fair comparison. We also compare against Random Pruning (RP), which retains the same fraction of data as SP but selects pairs randomly; RP consistently underperforms all other methods and is therefore excluded from the main results table but included in the efficiency analysis (Figure~\ref{fig:efficiency}) and the full results in Appendix~\ref{appendix:bge_detailed}. Dataset-specific hyperparameters are detailed in Appendix~\ref{appendix:hyperparams}. For LLM-based retriever experiments (Section~\ref{sec:qwen3}), we use Qwen3-Embedding-0.6B with hyperparameters similarly optimized for its FT baseline; implementation details are provided in Appendix~\ref{appendix:qwen3_impl}.

\subsection{Main Results}

\begin{table}[ht]
\centering
\caption{OPERA vs.\ baselines on bge-large-en-v1.5. Best in \textbf{bold}, second-best \underline{underlined}. $^\dagger$Datasets seen during pretraining.}
\label{tab:results}
\resizebox{\textwidth}{!}{
\begin{tabular}{@{}ll ccccc @{\hspace{8pt}} ccccc@{}}
\toprule
& & \multicolumn{5}{c}{\textit{NDCG@10}} & \multicolumn{5}{c}{\textit{Recall@20}} \\
\cmidrule(lr){3-7} \cmidrule(lr){8-12}
& & & & & \multicolumn{2}{c}{\textbf{OPERA}} & & & & \multicolumn{2}{c}{\textbf{OPERA}} \\
\cmidrule(lr){6-7} \cmidrule(lr){11-12}
\textbf{Domain} & \textbf{Dataset} & \textbf{PT} & \textbf{FT} & \textbf{IB} & \textbf{SP} & \textbf{DP} & \textbf{PT} & \textbf{FT} & \textbf{IB} & \textbf{SP} & \textbf{DP} \\
\midrule
Nutrition & NFCorpus & 0.451 & 0.466 & 0.470 & \textbf{0.491} & \underline{0.480} & 0.232 & 0.300 & \underline{0.304} & 0.267 & \textbf{0.304} \\
\multirow{2}{*}{Medical} & TripClick (h) & 0.219 & \underline{0.298} & 0.295 & 0.270 & \textbf{0.309} & 0.189 & 0.242 & \underline{0.244} & 0.219 & \textbf{0.252} \\
& TripClick (t) & 0.208 & 0.245 & \underline{0.248} & 0.236 & \textbf{0.249} & 0.334 & \underline{0.398} & 0.396 & 0.373 & \textbf{0.400} \\
Finance & FiQA & 0.489 & 0.514 & 0.514 & \underline{0.516} & \textbf{0.524} & 0.602 & \underline{0.639} & \textbf{0.640} & 0.630 & 0.639 \\
Non-Factoid QA & ANTIQUE & 0.543 & \underline{0.575} & 0.572 & 0.564 & \textbf{0.590} & \underline{0.414} & 0.405 & 0.398 & \textbf{0.428} & 0.413 \\
\multirow{2}{*}{Factoid QA} & TriviaQA & 0.487 & 0.481 & 0.482 & \textbf{0.501} & \underline{0.491} & 0.429 & 0.458 & \textbf{0.460} & 0.445 & \underline{0.460} \\
& HotpotQA$^\dagger$ & 0.790 & 0.806 & \underline{0.806} & 0.803 & \textbf{0.812} & 0.807 & \underline{0.836} & 0.835 & 0.781 & \textbf{0.838} \\
Fact Verif. & FEVER$^\dagger$ & 0.868 & 0.892 & 0.893 & \textbf{0.915} & \underline{0.902} & 0.950 & 0.960 & \underline{0.961} & 0.950 & \textbf{0.962} \\
\midrule
\multicolumn{2}{l}{Average} & 0.507 & 0.535 & 0.535 & \underline{0.537} & \textbf{0.545} & 0.495 & \underline{0.530} & 0.530 & 0.512 & \textbf{0.534} \\
\midrule
\multicolumn{2}{l}{Avg. Rank (Unseen)} & 4.67 & 3.33 & 3.00 & \underline{2.67} & \textbf{1.33} & 4.50 & 2.83 & \underline{2.33} & 3.50 & \textbf{1.83} \\
\multicolumn{2}{l}{Avg. Rank (Seen)} & 5.00 & 3.50 & \underline{2.50} & \underline{2.50} & \textbf{1.50} & 4.00 & \underline{2.50} & \underline{2.50} & 5.00 & \textbf{1.00} \\
\multicolumn{2}{l}{Avg. Rank (Overall)} & 4.75 & 3.38 & 2.88 & \underline{2.63} & \textbf{1.38} & 4.38 & 2.75 & \underline{2.38} & 3.88 & \textbf{1.63} \\
\bottomrule
\end{tabular}
}
\vspace{-2pt}
{\footnotesize PT: Pretrained, IB: InfoBatch, SP: Static Pruning, DP: Dynamic Pruning. $^\dagger$ Datasets seen in pretraining.}
\end{table}

\begin{table}[t]
\centering
\caption{Comparison on Qwen3-Embedding-0.6B. Large-corpus datasets are excluded due to the prohibitive cost of embedding millions of documents with an LLM. Best in \textbf{bold}, second-best \underline{underlined}.}
\label{tab:qwen3_results}
\resizebox{\textwidth}{!}{
\begin{tabular}{@{}ll ccccc @{\hspace{8pt}} ccccc@{}}
\toprule
& & \multicolumn{5}{c}{\textit{NDCG@10}} & \multicolumn{5}{c}{\textit{Recall@20}} \\
\cmidrule(lr){3-7} \cmidrule(lr){8-12}
& & & & & \multicolumn{2}{c}{\textbf{OPERA}} & & & & \multicolumn{2}{c}{\textbf{OPERA}} \\
\cmidrule(lr){6-7} \cmidrule(lr){11-12}
\textbf{Domain} & \textbf{Dataset} & \textbf{PT} & \textbf{FT} & \textbf{IB} & \textbf{SP} & \textbf{DP} & \textbf{PT} & \textbf{FT} & \textbf{IB} & \textbf{SP} & \textbf{DP} \\
\midrule
Nutrition & NFCorpus & 0.441 & 0.479 & 0.478 & \textbf{0.487} & \underline{0.479} & 0.211 & \textbf{0.314} & 0.308 & 0.265 & \underline{0.311} \\
Non-Factoid QA & ANTIQUE & 0.518 & 0.496 & 0.506 & \textbf{0.540} & \underline{0.520} & \textbf{0.398} & 0.340 & 0.329 & \underline{0.396} & 0.353 \\
Factoid QA & TriviaQA & 0.467 & 0.489 & 0.484 & \underline{0.493} & \textbf{0.504} & 0.403 & \textbf{0.462} & 0.453 & 0.421 & \underline{0.461} \\
\midrule
\multicolumn{2}{l}{Average} & 0.475 & 0.488 & 0.489 & \textbf{0.507} & \underline{0.501} & 0.337 & \underline{0.372} & 0.363 & 0.361 & \textbf{0.375} \\
\bottomrule
\end{tabular}
}
\end{table}

\vspace{-1pt}
\subsubsection{bge-large-en-v1.5}

We conduct a comparative analysis of SP and DP against the pretrained model~\cite{bge}, standard finetuning (FT)~\cite{bge}, and InfoBatch~\cite{infobatch}. The evaluation metrics are NDCG~\cite{ndcg, beir, mteb} and Recall~\cite{hardneg, bgem3}, assessed at the top 10 and 20 retrievals, respectively, with all methods trained for an equivalent number of iterations.

\textbf{SP} confirms the quality-coverage tradeoff in the previous section: it outperforms FT and InfoBatch in NDCG@10 across unseen, seen, and all datasets on average, achieving the best NDCG@10 on NFCorpus (0.491), TriviaQA (0.501), and FEVER (0.915). As expected from our analysis, this comes at the cost of recall, as SP's Recall@20 average rank drops to 3.88, because \textit{a priori} removal of training pairs reduces query diversity. Despite this, SP offers substantial data efficiency: it drops 75\% of pairs yet improves ranking, and subsequent analysis demonstrates even faster convergence and effective denoising capabilities.

\textbf{DP} resolves the quality-coverage tradeoff, achieving the best performance on both metrics. It achieves the highest average rank on NDCG@10 (1.38) and Recall@20 (1.63), consistently outperforming other methods for both unseen (NDCG@10: 1.33, Recall@20: 1.83) and seen (NDCG@10: 1.50, Recall@20: 1.00) datasets. DP achieves the highest NDCG@10 on 6 of 8 datasets and the highest Recall@20 on 5 of 8. By replacing hard exclusions, DP maintains the broad data coverage needed for recall while concentrating training effort on informative instances. An ablation of DP's hierarchical design is presented in Section~\ref{subsec:ablation_dp}, with additional analysis on the static pruning data retention rate in Appendix~\ref{appendix:sp_rate}.

\subsubsection{Qwen3-Embedding-0.6B}
\label{sec:qwen3}

To investigate whether OPERA generalizes beyond encoder-only models, we evaluate on Qwen3-Embedding-0.6B~\cite{qwen3embedding}, a decoder-based LLM embedding model that employs last-token pooling and instruction-based query encoding, representing a fundamentally different architecture from the CLS-pooling encoder model used above. Due to the higher computational cost and potential data leakage from large-scale pretraining, we use a higher learning rate (1e-5 vs.\ 1e-6) with only 2,000 iterations (vs.\ 8,000--32,000 for BGE). Note that this setting is inherently less favorable to DP, which benefits from more iterations to dynamically adjust sampling rates. We evaluate on datasets with fewer than 1M documents (NFCorpus, ANTIQUE), with the exception of TriviaQA (21M documents), included to demonstrate scalability. We exclude FiQA, as the pretrained Qwen3-Embedding model already outperforms all finetuned methods on this dataset (NDCG@10: 0.511, Recall@20: 0.633), likely due to high-quality financial domain data in its pretraining corpus. As with the BGE experiments, all hyperparameters were first optimized for the vanilla FT baseline, and OPERA's pruning methods were applied without additional tuning.

Table~\ref{tab:qwen3_results} presents the results. Despite the limited training budget, SP achieves the best average NDCG@10 (0.507), while DP achieves the best average Recall@20 (0.375), reproducing the same quality-coverage pattern observed with bge-large-en-v1.5: SP excels at ranking due to its focus on high-quality examples, while DP maintains stronger recall through soft pruning. Notably, even under conditions unfavorable to dynamic pruning (few iterations, high learning rate), DP still outperforms baselines on average, achieving NDCG gains similar to SP while preserving and improving recall, confirming that OPERA's benefits extend to LLM-based retrievers. Detailed results are provided in Appendix~\ref{appendix:qwen3_detailed}.

\subsection{Ablation: Hierarchical Query-Document Pruning}
\label{subsec:ablation_dp}

\begin{table}[t]
\begin{minipage}[t]{0.38\textwidth}
\centering
\caption{Ablation of hierarchical pruning. Recall@20 is 0.639 for all; Recall@10 shown to differentiate.}
\label{tab:ablation_dp}
\resizebox{\textwidth}{!}{
\begin{tabular}{lcc}
\toprule
\textbf{Method} & \textbf{NDCG@10} & \textbf{Recall@10} \\
\midrule
FT & 0.514 & 0.547 \\
DP w/ Query Sel. & 0.517 & 0.556 \\
DP w/ Doc Sel. & 0.516 & 0.549 \\
DP w/ Both & \textbf{0.524} & \textbf{0.559} \\
\bottomrule
\end{tabular}
}
\end{minipage}%
\hfill
\begin{minipage}[t]{0.59\textwidth}
\centering
\caption{Computational overhead of DP with varying query update intervals ($I_u$) on FiQA.}
\label{tab:computation}
\resizebox{\textwidth}{!}{
\begin{tabular}{lccccc}
\toprule
\textbf{Method} & $\mathbf{I_u}$ & \textbf{Iters/sec} & \textbf{Speed Diff (\%)} & \textbf{NDCG@10} & \textbf{Recall@20} \\
\midrule
FT & -- & 2.43 & 0.00 & 0.514 & 0.639 \\
InfoBatch & -- & 2.42 & -0.30 & 0.514 & 0.640 \\
\midrule
DP & 1 & 2.32 & -4.49 & \textbf{0.524} & 0.639 \\
DP & 10 & 2.37 & -2.51 & 0.519 & \textbf{0.646} \\
DP & 100 & 2.39 & -1.64 & 0.522 & 0.635 \\
\bottomrule
\end{tabular}
}
\end{minipage}
\end{table}

A key design choice in DP is operating at two granularities, query selection and document selection, reflecting the two-stage sampling structure of retrieval training. To isolate the contribution of each granularity, we ablate them on FiQA~\cite{fiqa} (Table~\ref{tab:ablation_dp}), comparing three variants.
\textbf{Query selection only} applies the dynamic schedule to $P(q)$ (as in \textsc{SampleQuery}) with uniform document sampling $P(d|q)=1/m_q$. \textbf{Document selection only} keeps query sampling uniform ($P(q)=1/n$) and applies the dynamic schedule to $P(d|q)$ (as in \textsc{SampleDocument}). \textbf{Combined (full DP)} applies both jointly.

Query selection alone (NDCG@10: 0.517, Recall@10: 0.556) and document selection alone (NDCG@10: 0.516, Recall@10: 0.549) both improve over FT (NDCG@10: 0.514, Recall@10: 0.547), confirming that both granularities carry complementary signal, and their combination achieves the highest scores on both metrics. This supports our core argument that retrieval-specific data pruning must account for the query-document hierarchy.

This ablation isolates the \emph{hierarchy}; DP additionally relies on cosine \emph{scheduling} and \emph{soft} sampling, whose individual contributions we isolate through controlled single-component ablations on FEVER in Appendix~\ref{appendix:ablation_components}. An additional ablation on the static pruning data retention rate is provided in Appendix~\ref{appendix:sp_rate}.

\subsection{Efficiency Analysis}

\subsubsection{Convergence Speed}

\begin{figure*}[t!]
    \centering
    \includegraphics[width=0.95\textwidth]{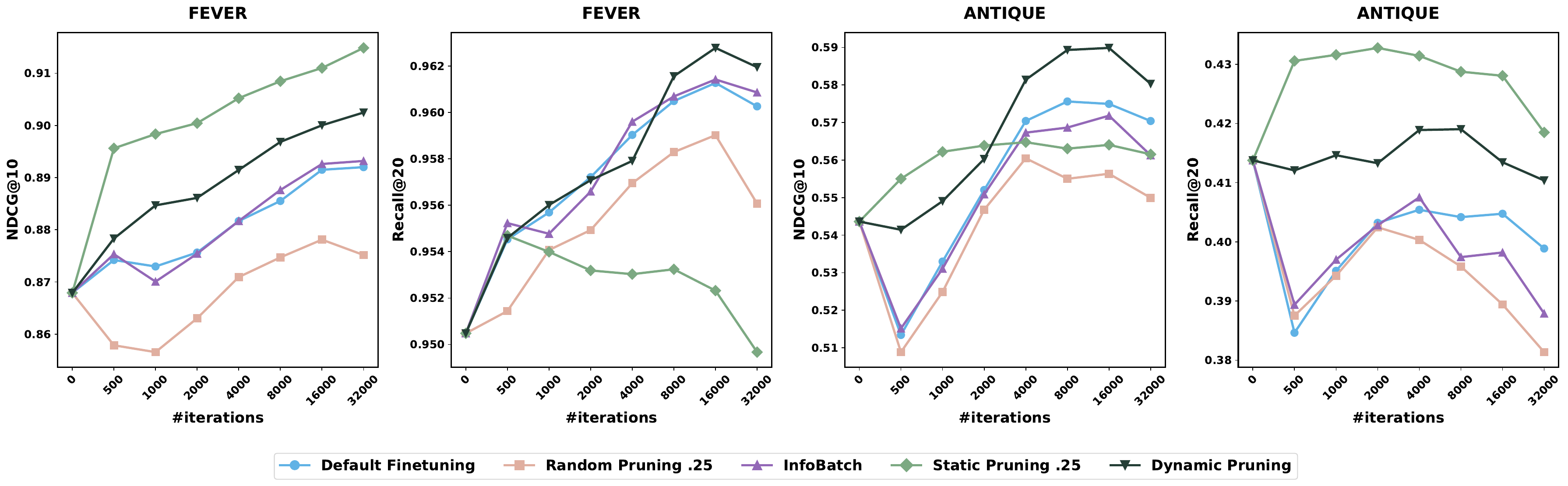}
    \caption{Training efficiency on ANTIQUE (unseen) and FEVER (seen). RP and SP use retention rate $k{=}0.25$.}
    \label{fig:efficiency}
\end{figure*}

Figure~\ref{fig:efficiency} illustrates the training efficiency on both the unseen ANTIQUE dataset~\cite{antique} and the FEVER dataset~\cite{fever} seen during pretraining. We evaluate SP and DP against FT and alternative pruning approaches across multiple training iterations. To ensure a fair comparison, we conducted separate experiments for each iteration count without using checkpoints, maintaining full learning rate scheduling in all experiments. \textbf{Note that for DP, each run produces a fundamentally different sampling trajectory due to the dependence of the cosine schedule on $t_{max}$, yet DP consistently outperforms baselines across all iteration counts, demonstrating robustness to these scheduling variations.}

Both pruning strategies demonstrate substantial efficiency gains. On FEVER, while FT and InfoBatch~\cite{infobatch} require 16,000 iterations to achieve optimal NDCG@10, DP achieves comparable results in fewer than 8,000 iterations, and SP reaches this level in fewer than 500 iterations. SP shows particularly rapid convergence in early training stages and consistently outperforms baseline approaches on NDCG@10, while DP demonstrates robustness across both NDCG@10 and Recall@20. Although DP introduces a small per-iteration overhead, this is offset by requiring fewer than 50\% of the training iterations to reach peak performance. On FEVER specifically, combining the $\le 8{,}000$ vs.\ $16{,}000$ iteration reduction with the $2.82\%$ per-step overhead at $I_u=10$ (Table~\ref{tab:computation_fever}) yields an approximately 49\% reduction in total wall-clock training time, making DP more efficient overall.

\subsubsection{Computation Overhead}
\label{section:overhead}

Table~\ref{tab:computation} compares the computational costs of DP with varying query update intervals ($I_u$). With the most frequent updates ($I_u = 1$), DP processes 2.32 iterations per second compared to FT's 2.43, a 4.49\% reduction. Setting $I_u = 100$ reduces this to 1.64\% while maintaining comparable performance, and $I_u = 10$ achieves the highest Recall@20 (0.646) with only 2.51\% overhead. The document update interval has negligible computational impact.

The overhead has two sources: (1) per-step sampling logic, which depends only on batch size and is negligible; and (2) score pooling and quantile computation at each update interval, which scales with the number of training pairs $N$ and inversely with $I_u$. Table~\ref{tab:computation} reports FiQA (14K pairs); on larger datasets the quantile computation is more expensive. We therefore also measure overhead on FEVER (109K queries, 140K pairs), one of our largest datasets (Table~\ref{tab:computation_fever}). While frequent updates ($I_u = 1$) incur 17.4\% overhead, a modest interval ($I_u = 10$) reduces it to only 2.82\% with on-par performance, confirming that DP remains efficient on large datasets. The current implementation uses \texttt{pad\_sequence}, \texttt{split}, and \texttt{quantile} for simplicity; the $I_u=1$ cost could be reduced substantially with further engineering, which we leave to future work given that $I_u=10$ already achieves negligible cost.

\begin{table}[t]
\centering
\caption{Per-step overhead of DP on FEVER, a large-scale dataset. The per-update score pooling and quantile computation scale with dataset size, but a modest update interval ($I_u{=}10$) reduces the overhead to $2.82\%$ with on-par performance, confirming scalability.}
\label{tab:computation_fever}
\resizebox{0.68\textwidth}{!}{
\begin{tabular}{lccc}
\toprule
\textbf{Update Interval} & \textbf{Per-step overhead} & \textbf{NDCG@10} & \textbf{Recall@20} \\
\midrule
FT & 0.00\% & 0.892 & 0.960 \\
DP ($I_u = 1$) & 17.40\% & \textbf{0.902} & 0.962 \\
DP ($I_u = 10$) & 2.82\% & 0.898 & \textbf{0.963} \\
\bottomrule
\end{tabular}
}
\end{table}

\begin{table*}[t]
\centering
\caption{Denoising evaluation on ANTIQUE with noisy positive samples. The FT baseline uses all training data ($k{=}1.0$); SP$_k$ then filters within this set at retention rate $k$, and SP$_k$+DP is the two-stage pipeline. Best in \textbf{bold}, second-best \underline{underlined}.}
\label{tab:denoising}
\resizebox{0.98\textwidth}{!}{
\begin{tabular}{l|ccc|cccc|ccc}
\toprule
\multirow{2}{*}{{\textbf{Metric}}} & \multicolumn{3}{c|}{{\textbf{Baselines}}} & \multicolumn{4}{c|}{{\textbf{OPERA (Individual)}}} & \multicolumn{3}{c}{{\textbf{OPERA (Two-Stage)}}} \\
\cmidrule(lr){2-4} \cmidrule(lr){5-8} \cmidrule(lr){9-11}
& \textbf{Pretrained} & \textbf{FT} & \textbf{InfoBatch} & \textbf{SP$_{.25}$} & \textbf{SP$_{.5}$} & \textbf{SP$_{.75}$} & \textbf{DP} & \textbf{SP$_{.25}$+DP} & \textbf{SP$_{.5}$+DP} & \textbf{SP$_{.75}$+DP} \\
\midrule
NDCG@10 & 0.543 & 0.570 & 0.567 & 0.560 & 0.560 & \underline{0.582} & \textbf{0.587} & 0.560 & 0.567 & \underline{0.582} \\
Recall@20 & 0.414 & 0.395 & 0.390 & \underline{0.430} & \underline{0.430} & 0.419 & 0.411 & 0.426 & \textbf{0.433} & 0.422 \\
\bottomrule
\end{tabular}
}
\end{table*}
\subsection{Denoising Capability}
\label{section:denoising}

We evaluate OPERA's robustness to label noise by introducing noisy samples into the ANTIQUE~\cite{antique} training set. Specifically, we include documents with lower relevance levels (level 2, which ``does not answer the question'') as positives, simulating real-world scenarios with imperfect annotations. The detailed setup is provided in Appendix~\ref{appendix:denoising_setup}.

Table~\ref{tab:denoising} presents the results. Consistent with Theorem~\ref{theorem:denoise}, which predicts that pruning outperforms FT when selection quality exceeds random, a clear divergence emerges between ranking and retrieval metrics under noisy conditions: while NDCG@10 improves from 0.543 (Pretrained) to 0.570 (FT), Recall@20 degrades from 0.414 to 0.395, indicating that retrieval effectiveness is more sensitive to false-positive training signals than ranking performance.

Among individual methods, DP achieves the highest NDCG@10 (0.587), while SP yields substantially better Recall@20 (0.430) than both FT (0.395) and DP (0.411). This advantage of SP in noisy settings motivates the two-stage approach: applying SP first to filter out noisy samples, followed by DP on the filtered data. Notably, SP$_{.75}$+DP outperforms all baselines on both metrics (NDCG@10: 0.582, Recall@20: 0.422), confirming that OPERA provides an effective denoising mechanism.

\subsection{Sampling Weight Visualization}
\label{subsec:sampling_vis}

\noindent
\begin{minipage}[t]{0.62\textwidth}
\vspace{0pt}
To analyze how DP allocates training focus over time, we visualize the evolution of sampling probabilities on the FiQA~\cite{fiqa} dataset (Figure~\ref{fig:pruning_vis}). Like self-paced learning~\cite{clsp}, DP continuously re-evaluates and adjusts sampling weights as model representations evolve, while ensuring all examples remain accessible throughout training (see Section~\ref{subsec:adaptive_training} for the detail comparison).

Figure~\ref{fig:pruning_vis}(a) shows the probability distribution grouped by query. Notably, nearly all queries maintain nonzero sampling probabilities throughout training, demonstrating that DP preserves broad query coverage unlike SP, which would exclude many queries entirely. The variation in intensity reflects quality-aware up-weighting while maintaining the diversity needed to preserve and even improve recall. Figure~\ref{fig:pruning_vis}(b), sorted by initial probabilities, reveals how DP dynamically redistributes attention: initially high-probability examples may decrease in importance while previously low-priority examples gain prominence. This demonstrates DP's ability to discover valuable training examples beyond what the pretrained model initially favors, shifting focus toward examples more informative in a domain-specific manner during the model training.
\end{minipage}%
\hfill
\begin{minipage}[t]{0.34\textwidth}
\vspace{-1pt}
\centering
\includegraphics[width=\linewidth]{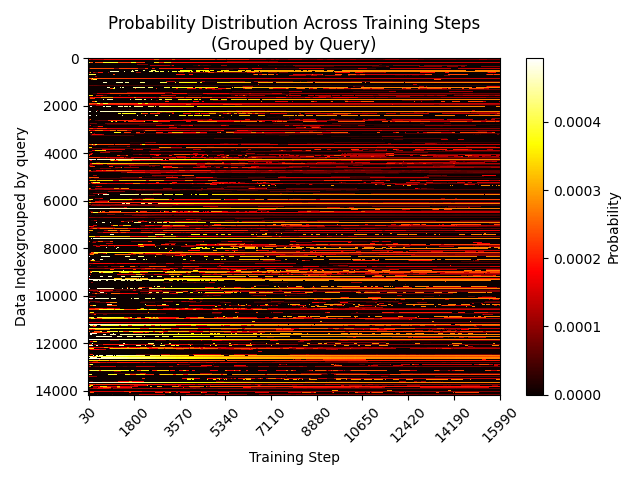}
{\small (a) Grouped by query}
\vspace{0pt}
\includegraphics[width=\linewidth]{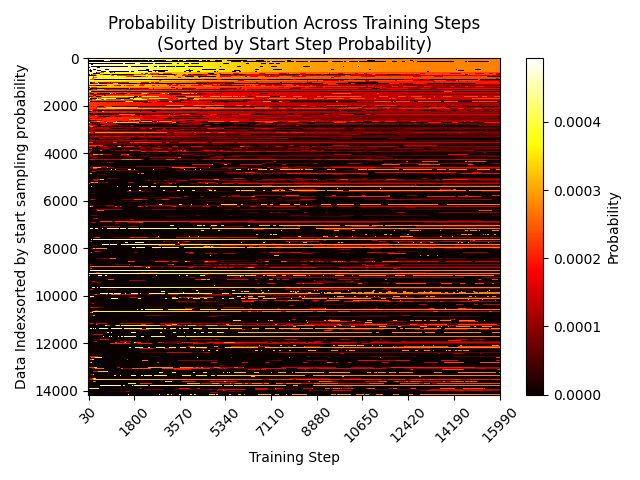}

{\small (b) Sorted by initial prob.}
\captionof{figure}{DP sampling probability evolution.}
\label{fig:pruning_vis}
\end{minipage}

\section{Conclusion}

We presented OPERA, a data pruning framework for domain adaptation of dense retrievers. Our investigation revealed a quality-coverage tradeoff intrinsic to the two-stage query-document sampling structure of retrieval training: static pruning (SP) improves ranking by focusing on high-quality pairs but reduces retrieval coverage, while dynamic pruning (DP) resolves this tradeoff through soft sampling modulation, achieving the best performance on both ranking and retrieval metrics across most evaluation settings while halving convergence time. Both approaches demonstrate effective denoising capability, especially when combined in a two-stage pipeline. Experiments on Qwen3-Embedding-0.6B provide evidence that these findings generalize beyond encoder-only architectures.

In practice, the choice depends on the application: SP is suited for ranking-focused scenarios where training speed is paramount, DP is preferred when both ranking and retrieval performance matter, and SP followed by DP is recommended when training data contains known label noise. We hope that the quality-coverage tradeoff identified in this work provides a useful lens for future research on efficient training for retrieval models.

\bibliography{custom}
\bibliographystyle{colm2026_conference}

\appendix
\newpage
\tableofcontents
\newpage

\section{Related Work}
\label{appendix:related_work}
\subsection{Dense Retrieval Models}
Recent advances in dense retrieval have demonstrated significant improvements over traditional sparse retrieval methods. These models leverage pretrained language models~\cite{bert, mistral, llama2} to generate robust embeddings in various retrieval tasks~\cite{mteb, beir}. While models like NV-Embed-v1~\cite{nvemb} with 7B parameters have pushed performance boundaries, a more compact model family such as BGE~\cite{bge, bgem3} offers an attractive balance between computational efficiency and effectiveness. Our work is based on the bge-large-en-v1.5 model~\cite{bge}, for its favorable balance between computational efficiency and strong performance, as well as its open source training and evaluation process, which enables reproducibility and practical deployment. This choice allows us to conduct extensive experiments on various data pruning strategies while maintaining manageable computational requirements.
More recently, LLM-based embedding models such as Qwen3-Embedding~\cite{qwen3embedding} have further advanced the state of the art by adapting large language models directly for embedding generation, achieving strong retrieval performance across diverse tasks, though at substantially higher computational cost.

\subsection{Data Pruning in Neural Networks}
Data pruning has emerged as a promising approach to improve training efficiency while maintaining or enhancing model performance~\cite{dp, dynamicp, el2n, pruningcvpr, llmdatacompos}. Recent innovations have shown that data pruning can surpass traditional power law scaling, improving efficiency with minimal performance degradation~\cite{beyond}. Previous work in this area can be categorized into two main approaches:

\paragraph{Static Pruning} Traditional static pruning methods select a fixed subset of training data before the training process begins~\cite{el2n, glister, coresets}. While these approaches can reduce training time, they often struggle with generalization across different architectures and datasets. In the context of dense retrievers, \cite{e5} introduced the consistency-based filter for pretraining, which retains only high-quality text pairs based on their ranking against random documents. Our work extends these ideas to the domain adaptation setting, where different considerations apply due to the distinct nature of the training data.

\paragraph{Dynamic Pruning} More recent approaches have explored dynamic data selection during training~\cite{dynamicp, dynamicu, infobatch, dpdiffusion, dpsample, dpspeech}. InfoBatch~\cite{infobatch} notably achieves training acceleration through unbiased dynamic pruning, adjusting sampling probabilities based on training loss. However, these methods primarily focus on maintaining model performance while improving convergence speed. In contrast, our approach demonstrates that careful selection of query-document pair can simultaneously enhance both training efficiency and model performance.

To the best of our knowledge, there is no prior work discussing data pruning for dense retrievers during the finetuning stage, where the query is selected regardless of its relevant documents. Therefore, we will compare our approach with our implementation of InfoBatch~\cite{infobatch} which prunes on the query level.

\paragraph{Diversity-based Selection} A related line of work selects training or retrieval examples to balance relevance and diversity, most notably Maximal Marginal Relevance (MMR)~\cite{mmr} and its successors, which iteratively pick items that are relevant yet dissimilar to those already chosen. While such methods explicitly optimize coverage, they are impractical for \emph{online}, per-step data selection during retriever finetuning. Selecting one document from $k$ candidates per query requires $O(\text{batch\_size} \times k^2)$ pairwise similarity computations to assess diversity, plus $O(\text{batch\_size} \times k)$ additional document embeddings per step, and this must be repeated as model scores drift throughout training. In contrast, OPERA reuses the cosine similarities already produced by the contrastive forward pass, adding only $1.6$--$4.5\%$ overhead (Table~\ref{tab:computation}) without any extra inference. OPERA instead achieves coverage structurally: by retaining every query and document with non-zero, dynamically scheduled sampling probability rather than by explicit diversity optimization. Reducing the cost of diversity-based selection for online retriever training is a promising direction we leave to future work.

\subsection{Curriculum Learning and Adaptive Training}
Curriculum learning has emerged as a promising approach to train neural networks by presenting training examples in a meaningful order. The idea of curriculum learning was formalized in~\cite{cl}, where it showed that gradually increasing the difficulty of training examples could lead to better generalization and faster convergence. In natural language processing, CL has been applied to machine translation~\cite{clmt}, question answering~\cite{clqa}, language modeling~\cite{cllm}, etc. Several approaches have been proposed for automatically determining the difficulty of training examples and generating curricula. Self-paced learning (SPL)~\cite{clsp} allows the model to automatically select its own curriculum based on the loss of training examples. Other works have explored the use of teacher-student frameworks~\cite{clts}, where a teacher model determines the curriculum for a student model. Related paradigms include active learning, which selects informative samples in discrete select-retrain cycles~\cite{al_survey}, and self-training, which iteratively refines pseudo-labels~\cite{self_training}. InfoBatch~\cite{infobatch} downsamples low-loss examples with inverse-probability reweighting.

Our dynamic pruning framework incorporates curriculum learning principles through its evolving threshold scheduling, which gradually adjusts both the ratio and sampling strength of training instances as the model's representations become more refined. However, DP differs from all of the above in three key respects: (1) it operates at two hierarchical levels (query and document), reflecting the two-stage sampling structure specific to retrieval training; (2) it uses increasing sampling strength via cosine scheduling, becoming more selective as the model's scoring improves, rather than relying on a fixed threshold or predetermined ordering; and (3) it maintains non-zero sampling probability for all examples throughout training, preserving coverage unlike curriculum learning and active learning which exclude examples entirely. A detailed comparison is provided in Section~\ref{subsec:adaptive_training}.

\section{Theoretical Analysis}
\label{appendix:proof}

We formalize the conditions under which data pruning outperforms standard finetuning. In dense retrieval, a query $q$ and its positive document $d$ are encoded into normalized embeddings $e_q, e_d \in \mathbb{R}^h$, with cosine similarity $s(e_q,e_d) = e_q^Te_d$. We analyze SP and DP by studying how each method's sampling strategy affects the optimal query embedding when some positive labels are noisy. Throughout the analysis, we consider a single query $q$.

\begin{lemma}\label{lem:monotone}
Let $u, v \in \mathbb{R}^n$ be unit vectors with $u \neq v$, and $k \in (0,1)$. Define
$f(k) = \frac{(ku + (1 - k)v)^T u}{\|ku + (1 - k)v\|}.$
Then $f'(k) > 0$ for all $k \in (0,1)$.

\noindent\textit{Proof.}
Direct computation yields
$f'(k) = \frac{(1 - (u^Tv)^2)(1 - k)}{\|ku + (1 - k)v\|^3}.$
Since $u \neq v \implies u^Tv < 1$, and $k < 1$, thus $f'(k) > 0$. \qed
\end{lemma}

\begin{theorem}\label{theorem:denoise}
Let $m_q$ be the total number of documents labeled as positive for query $q$, and $m_q^+$ be the number of correctly labeled documents ($m_q^+ \leq m_q$). Assume correctly labeled documents have embedding mean direction $\mu_1 \in \mathbb{R}^h$, noisy (false-positive) documents have mean direction $\mu_2 \in \mathbb{R}^h$, and negative documents have mean $0 \in \mathbb{R}^h$. For simplicity, we consider only the bias in the estimated query embedding and dismiss variance.

For SP, $r$ documents are selected ($\gamma r$ correctly labeled) for uniform sampling. For DP, $s$ documents are selected ($\rho s$ correctly labeled) with probability $\beta$ times higher than the remaining $m_q - s$ documents ($\beta > 1$). Define $E^{\text{FT}}$, $E^{\text{SP}}$, $E^{\text{DP}}$ as the expected cosine similarity between the optimal query embedding and true-positive documents under each strategy.

Then:
\begin{equation}
E^{SP} > E^{FT} \Leftrightarrow \gamma > \frac{m_q^+}{m_q}
\end{equation}
and similarly for DP. Additionally, when $\gamma = \rho$:
\begin{equation}
E^{SP} > E^{DP}.
\end{equation}
\end{theorem}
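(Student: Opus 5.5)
We model the optimal query embedding under each strategy as the normalized expected positive document under that strategy's sampling distribution. Bias only is considered, and negatives contribute mean $0$, so they do not affect the direction. We treat $\mu_1,\mu_2$ as unit vectors with $\mu_1\neq\mu_2$. Under a sampling distribution that puts total mass $k$ on correctly labeled documents and $1-k$ on noisy ones, the expected positive embedding is $k\mu_1+(1-k)\mu_2$. The optimal query embedding is therefore
\[
e_q^\ast=\frac{k\mu_1+(1-k)\mu_2}{\|k\mu_1+(1-k)\mu_2\|},
\]
and the expected cosine similarity with true positives, whose mean direction is $\mu_1$, is exactly $f(k)$ from Lemma~\ref{lem:monotone} with $u=\mu_1$, $v=\mu_2$. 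So every comparison reduces to comparing the effective fraction of correct mass $k$ under each strategy. Since $f$ is strictly increasing on $(0,1)$, and extends continuously to $[0,1]$ so the endpoints can be handled by continuity, $E^{A}>E^{B}$ holds if and only if $k_A>k_B$.

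Next we compute the three fractions.
\begin{itemize}
\item For FT, documents are sampled uniformly, so $k_{FT}=m_q^+/m_q$.
\item For SP, the $r$ retained documents are sampled uniformly, so $k_{SP}=\gamma$. The first claim follows at once: $E^{SP}>E^{FT}\Leftrightarrow \gamma>m_q^+/m_q$.
\item For DP, selected documents carry weight $\beta$ and the rest weight $1$. The selected set contains $\rho s$ correct documents, so the remaining $m_q-s$ documents contain $m_q^+-\rho s$ correct ones. Hence
\[
k_{DP}=\frac{\beta\rho s+(m_q^+-\rho s)}{\beta s+(m_q-s)}.
\]
\end{itemize}

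For the ``similarly for DP'' claim, write $k_{DP}$ as a mediant-type weighted average of $\rho$ (weight $\beta s$) and $c=(m_q^+-\rho s)/(m_q-s)$ (weight $m_q-s$). The overall ratio $m_q^+/m_q$ is the same average with weights $s$ and $m_q-s$. The sign of $k_{DP}-m_q^+/m_q$ therefore equals the sign of $\rho-c$, which equals the sign of $\rho-m_q^+/m_q$; the last step holds because $m_q^+/m_q$ lies between $\rho$ and $c$. Cross-multiplying makes this explicit:
\[
k_{DP}-\tfrac{m_q^+}{m_q}\ \propto\ (\beta-1)\,s\,(\rho m_q-m_q^+),
\]
with a positive proportionality constant. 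Since $\beta>1$, this gives $E^{DP}>E^{FT}\Leftrightarrow\rho>m_q^+/m_q$.

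For $E^{SP}>E^{DP}$ when $\gamma=\rho$, we use the same decomposition. Because $\beta$ is finite and $m_q-s>0$, $k_{DP}$ is a strict convex combination of $\rho$ and $c$. It is therefore strictly below $\rho=\gamma=k_{SP}$ whenever $c<\rho$. The condition $c<\rho$ is equivalent to $\rho>m_q^+/m_q$, i.e., to the selection being better than random. Monotonicity of $f$ then gives the claim.

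The main obstacle is this last step. The inequality needs the implicit assumption that selection beats the noise rate ($\rho>m_q^+/m_q$) and that $s<m_q$; otherwise $k_{DP}\ge k_{SP}$. I would state this assumption explicitly, noting that it is the regime in which the theorem's first part makes pruning useful. A secondary modeling point to justify is identifying the optimal query embedding with the normalized mean positive direction. I would argue this by noting that, ignoring variance and with zero-mean negatives, the contrastive objective's linear term in $e_q$ is $e_q^T\mathbb{E}[e_d]$, which is maximized on the unit sphere by the normalized mean.
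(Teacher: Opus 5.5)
Your proposal is correct and matches the paper's proof step for step: the paper also writes each optimal query embedding as a normalized mixture of $\mu_1,\mu_2$ with correct-mass fractions $m_q^+/m_q$, $\gamma$, and $(\rho s(\beta-1)+m_q^+)/(s(\beta-1)+m_q)$, and then applies Lemma~\ref{lem:monotone}. Your caveat on the last claim is exactly what the paper's proof establishes, since its final line is the equivalence $E^{SP}>E^{DP}\Leftrightarrow\gamma>m_q^+/m_q$ rather than an unconditional inequality; your extra condition $s<m_q$ is automatic there, because $s=m_q$ forces $\rho=m_q^+/m_q$.
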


\begin{proof}
Under FT, the model maximizes $\mathcal{L}^{FT} = \sum_{i=1}^{m_q} s(e_q, e_{d_i})$, yielding:
\begin{align}
e_q^{\text{FT}} &= C_1\!\left(\tfrac{m_q^+}{m_q}\mu_1 + (1-\tfrac{m_q^+}{m_q})\mu_2\right)
\end{align}
and similarly for SP and DP:
\begin{align}
e_q^{\text{SP}} = C_2(\gamma\mu_1 + (1-\gamma)\mu_2), \quad
e_q^{\text{DP}} &= C_3((\rho s(\beta\!-\!1) + m_q^+)\mu_1 \nonumber \\
&\quad + ((1\!-\!\rho)s(\beta\!-\!1) + m_q\!-\!m_q^+)\mu_2)
\end{align}
where $C_1, C_2, C_3$ are normalization constants. By Lemma~\ref{lem:monotone}:
\begin{equation}
\gamma > \tfrac{m_q^+}{m_q}
\Leftrightarrow
s(e_q^{\rm SP},\mu_1)>s(e_q^{\rm FT},\mu_1)
\Leftrightarrow E^{SP} > E^{FT}
\end{equation}
and similarly for DP. Comparing SP and DP when $\gamma = \rho$:
\begin{equation}
E^{SP} > E^{DP} \Leftrightarrow \gamma > \tfrac{m_q^+}{m_q}
\end{equation}
\end{proof}

\section{Experimental Setup}
\label{appendix:setup}

\subsection{Dataset Statistics}
\label{appendix:dataset}

Table~\ref{tab:dataset-stats} presents detailed statistics for all eight evaluation datasets. The datasets vary significantly in scale, ranging from NFCorpus with 3,633 documents to TriviaQA with over 21 million documents. Training set sizes span from 2,426 queries (ANTIQUE) to 109,810 queries (FEVER), while the number of positive query-document pairs ranges from 14,166 (FiQA) to 741,436 (TriviaQA), reflecting diverse annotation densities across domains. FEVER and HotpotQA were previously seen by the pretrained model during its initial training, while the remaining six datasets are unseen.

\begin{table*}[h]
\centering
\caption{Dataset statistics. \#pos: number of positive query-document pairs. $^\dagger$Datasets seen during pretraining.}
\label{tab:dataset-stats}
\scalebox{0.85}{
\begin{tabular}{@{}lcr*{4}{r}@{}}
\toprule
\multirow{2}{*}{\textbf{Dataset}} & \multirow{2}{*}{\textbf{Domain}} & \multirow{2}{*}{\textbf{\#Docs}} & \multicolumn{2}{c}{\textbf{Train}} & \multicolumn{2}{c}{\textbf{Test}} \\
\cmidrule(lr){4-5} \cmidrule(lr){6-7}
& & & \textbf{\#q} & \textbf{\#pos} & \textbf{\#q} & \textbf{\#pos} \\
\midrule
NFCorpus & Nutrition & 3,633 & 2,590 & 110,575 & 323 & 12,334 \\
TripClick (h) & Medical & 1,523,878 & 3,529 & 55,663 & 1,175 & 32,067 \\
TripClick (t) & Medical & 1,523,878 & 105,964 & 424,820 & 1,175 & 6,202 \\
FiQA & Finance & 57,638 & 5,500 & 14,166 & 648 & 1,706 \\
ANTIQUE & Non-Factoid QA & 403,666 & 2,426 & 19,813 & 200 & 2,976 \\
TriviaQA & Factoid QA & 21,015,324 & 78,785 & 741,436 & 8,837 & 82,658 \\
HotpotQA$^\dagger$ & Factoid QA & 5,233,329 & 85,000 & 170,000 & 7,405 & 14,810 \\
FEVER$^\dagger$ & Fact Verif. & 5,416,568 & 109,810 & 140,085 & 6,666 & 7,937 \\
\bottomrule
\multicolumn{7}{l}{\footnotesize $^\dagger$ Datasets seen by the pretrained model.}
\end{tabular}
}
\end{table*}

\subsection{Dataset-specific Hyperparameters}
\label{appendix:hyperparams}

Table~\ref{tab:hyp_details} summarizes the dataset-specific hyperparameters used during training. All hyperparameters were optimized based on the vanilla FT baseline performance.

\begin{table}[h]
\centering
\caption{Dataset-specific hyperparameters, optimized for the FT baseline.}
\label{tab:hyp_details}
\scalebox{0.80}{
\begin{tabular}{lcccr}
\toprule
\textbf{Dataset} & \textbf{Negative} & \textbf{Mining} & $\mathbf{k}$ & \textbf{Training} \\
& \textbf{Mining} & \textbf{Range} & & \textbf{Iters} \\
\midrule
NFCorpus & Random & --- & 0.25 & 32,000 \\
TripClick (h) & Hard & 100--1,100 & 0.25 & 32,000 \\
TripClick (t) & Hard & 100--1,100 & 0.25 & 32,000 \\
FiQA & Hard & 10--100 & 0.50 & 8,000 \\
TriviaQA & Random & --- & 0.25 & 16,000 \\
ANTIQUE & Hard & 50--500 & 0.25 & 16,000 \\
FEVER & Hard & 10--200 & 0.25 & 32,000 \\
HotpotQA & Hard & 10--200 & 0.25 & 32,000 \\
\bottomrule
\end{tabular}
}
\end{table}

The model finetuning process uses a learning rate of 1e-6 with a linear scheduler and operates on a per-device train batch size of 8, yielding a total batch size of 64. Training uses FP16 precision with a maximum gradient norm of 1.0, no warmup, and no weight decay. The temperature is set to 0.02, and the system leverages cross-device negatives during training. The maximum query and passage length are set to 128 and 512 tokens, respectively. All embeddings are normalized. FiQA~\cite{fiqa}'s data retention ratio $k$ is set to 0.5 to accommodate its smaller dataset size.

For dynamic pruning, we initialize the query cutoff ratio ($r_s$) at 0.25, and use sampling strengths of 2 and 5 for the starting ($\alpha_s$) and ending ($\alpha_e$) values, respectively. The document level includes an initial cutoff ratio ($v_s$) of 0.25, a terminal cutoff ratio ($v_e$) of 0.5, and a constant sampling strength ($\beta$) of 5. All cutoff schedulers follow a cosine schedule.

\subsection{Qwen3-Embedding-0.6B Implementation Details}
\label{appendix:qwen3_impl}

For the Qwen3-Embedding-0.6B experiments, we use a learning rate of 1e-5 (10$\times$ higher than the BGE experiments) and train for 2,000 iterations across all datasets. The higher learning rate and reduced iteration count reflect the substantially higher computational cost of LLM-based retrievers and the potential data leakage from Qwen3-Embedding's large-scale pretraining corpus. We use a per-device batch size of 2, yielding a total batch size of 16, with gradient accumulation steps of 4 to achieve an effective batch size of 64. Training uses BF16 precision with the same temperature (0.02) and contrastive learning setup as the BGE experiments. The maximum query and passage lengths are set to 128 and 512 tokens, respectively, with instruction-based query encoding following the Qwen3-Embedding default prompts. The static pruning retention rate $k$ is set to 0.25 for all datasets. Dynamic pruning uses the same hyperparameters as the BGE experiments. Full results are provided in Appendix~\ref{appendix:qwen3_detailed}.

\subsection{Denoising Experiment Setup}
\label{appendix:denoising_setup}

The ANTIQUE dataset~\cite{antique} defines relevance levels as follows:
\begin{itemize}
    \item \textbf{Level 1:} Completely out of context or does not make any sense (4.6\% of the training data)
    \item \textbf{Level 2:} Does not answer the question, or provides an unreasonable answer, but is not out of context (23.1\% of the training data)
    \item \textbf{Level 3:} Can be an answer to the question, but is not sufficiently convincing (29.5\% of the training data)
    \item \textbf{Level 4:} Looks reasonable and convincing, with high quality (42.8\% of the training data)
\end{itemize}

Our experimental setup is designed as follows:
\begin{itemize}
    \item \textbf{Test set:} We maintain the standard evaluation criterion, considering only documents with relevance levels of 3 and 4 as positive samples~\cite{antique}.

    \item \textbf{Training set:} Documents with relevance levels of 2, 3, and 4 were treated as positive samples. The inclusion of level 2 documents, which are documented as insufficient answers, deliberately introduces noise into the positive samples. We did not include level 1 documents as positive samples since they represent a small portion (4.6\%) of the training data and are more like random noise, which differs from real-world scenarios where noisy samples are typically hard negatives that share some relevance with the query.
\end{itemize}


\section{Additional Experiments}
\label{appendix:additional}

\subsection{Consistency-Based Score Analysis}
\label{appendix:cbs}

\begin{table}[h]
\centering
\scalebox{1.}{
\begin{tabular}{lccc}
\toprule
\textbf{Dataset} & \textbf{Metric} & \textbf{SP} & \textbf{SP (CBS)} \\
\midrule
\multirow{2}{*}{NFCorpus} & NDCG@10 & \textbf{0.491} & 0.463 \\
& Recall@20 & 0.267 & \textbf{0.305} \\
\midrule
\multirow{2}{*}{FiQA} & NDCG@10 & 0.511 & \textbf{0.514} \\
& Recall@20 & 0.620 & \textbf{0.635} \\
\midrule
\multirow{2}{*}{ANTIQUE} & NDCG@10 & 0.564 & \textbf{0.577} \\
& Recall@20 & \textbf{0.428} & 0.403 \\
\midrule
\multirow{2}{*}{FEVER} & NDCG@10 & \textbf{0.915} & 0.894 \\
& Recall@20 & 0.950 & \textbf{0.961} \\
\midrule
\multirow{2}{*}{Average} & NDCG@10 & \textbf{0.620} & 0.612 \\
& Recall@20 & 0.566 & \textbf{0.576} \\
\bottomrule
\end{tabular}
}
\caption{SP with cosine similarity vs.\ CBS as pruning metric. Best per row in \textbf{bold}.}
\label{tab:consistency}
\end{table}

The consistency-based filter was introduced in~\cite{e5} as a quality control mechanism for large-scale pretraining. The core insight is that high-quality training pairs should maintain relevance compared to random documents: the method ranks each positive pair against random negatives and retains only those that consistently rank highly. We adapt this to our finetuning context with three modifications: (1) we leverage the pretrained checkpoint directly instead of training on noisy data, (2) we reduce the random document pool from one million to ten thousand to match the smaller scale of downstream tasks, and (3) we replace the static top-k threshold with a reciprocal rank metric (CBS) for percentage-based filtering. We sample random negatives from positive documents of other queries to avoid additional embedding computation.

We compare SP using CBS with SP using cosine similarity. Cosine similarity directly measures the semantic relationship between queries and documents, while CBS evaluates pairs based on their relative ranking against random samples.

Table~\ref{tab:consistency} presents the comparative performance across these datasets. SP uses cosine similarity as the similarity metric for pruning, while SP (CBS) uses CBS. In the nutrition dataset NFCorpus, SP achieves optimal NDCG@10 (0.491), while SP (CBS) leads in Recall@20 (0.305). FiQA demonstrates marginal improvements with SP (CBS) on both metrics (NDCG@10: 0.514, Recall@20: 0.635). For ANTIQUE, SP (CBS) leads in NDCG@10 (0.577), while SP with cosine similarity achieves higher Recall@20 (0.428). In FEVER, SP excels in NDCG@10 (0.915), while SP (CBS) shows superior Recall@20 (0.961). Overall, SP achieves better NDCG@10 (0.620) while SP (CBS) performs better on Recall@20 (0.576).

However, CBS is incompatible with DP as it requires computationally expensive recalculation of embeddings on all negative documents after each model update. We therefore adopt cosine similarity as the default pruning metric for both SP and DP.

\subsection{Static Pruning Data Retention Rate}
\label{appendix:sp_rate}

\begin{figure}[t]
    \centering
    \includegraphics[width=0.75\linewidth]{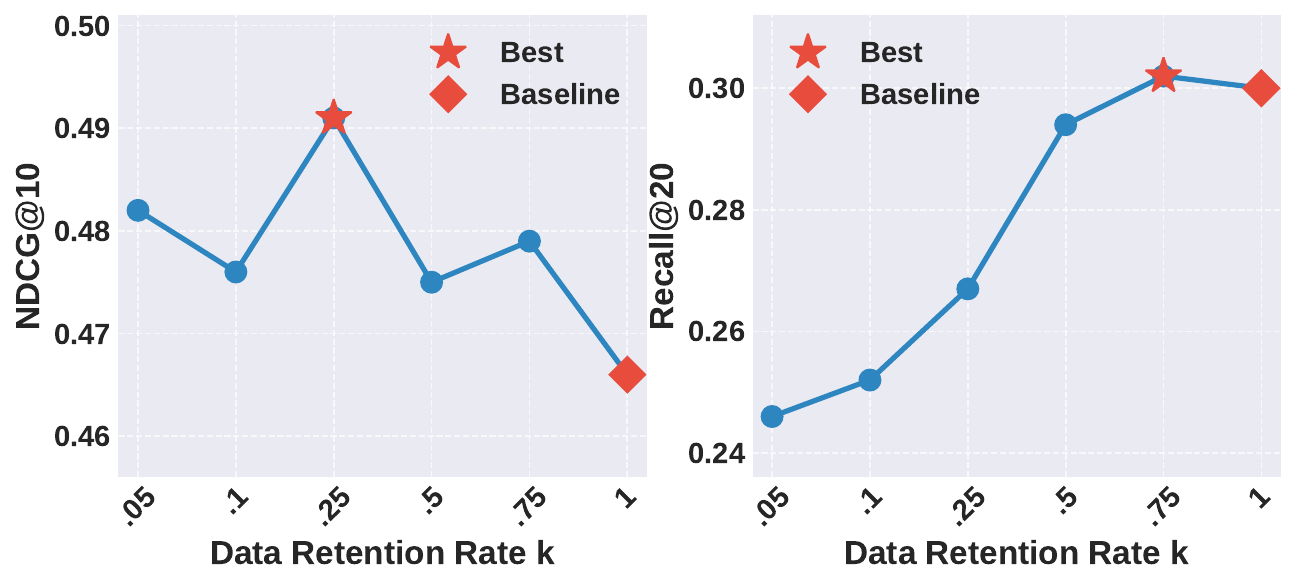}
    \caption{Effect of SP data retention rate $k$ on NFCorpus. SP improves NDCG@10 over FT at all retention rates.}
    \label{fig:pruning_ratio}
\end{figure}

We evaluate the effects of varying data retention rates ($k$) in Static Pruning (SP) through comprehensive experiments on NFCorpus~\cite{nfcorpus}. The experimental setup comprises six configurations: standard finetuning (FT) as baseline ($k=1$) and SP with $k=0.75,0.5,0.25,0.1,0.05$.

The results, presented in Figure~\ref{fig:pruning_ratio}, demonstrate that SP consistently surpasses FT in terms of NDCG@10 across all retention rates. The optimal performance was achieved at $k=0.25$, achieving a peak NDCG of 0.491. Notably, even with minimal data retention (5\%), SP maintains superior performance with an NDCG of 0.482 compared to FT (0.466). As expected, Recall@20 shows a monotonic decrease as the retention rate reduces from 100\% to 5\%, which aligns with the reduced data visibility. These findings highlight SP's data efficiency and indicate the potential for computational resource optimization while maintaining or enhancing ranking performance.

\subsection{Disentangling DP's Components}
\label{appendix:ablation_components}

\begin{table}[t]
\centering
\caption{Controlled single-component ablation of DP on FEVER. Each variant removes one component: \emph{Soft} (soft sampling vs.\ hard cutoff), \emph{Schedule} (cosine scheduling vs.\ fixed strength $\alpha{=}4,\beta{=}5,v{=}0.25$), and \emph{Hierarchy} (query/document two-level sampling). DP is the only variant that improves over FT on \emph{both} metrics. Best in \textbf{bold}.}
\label{tab:ablation_components}
\resizebox{0.72\textwidth}{!}{
\begin{tabular}{lccccc}
\toprule
\textbf{Variant} & \textbf{Hierarchy} & \textbf{Schedule} & \textbf{Soft} & \textbf{NDCG@10} & \textbf{Recall@20} \\
\midrule
FT & --- & --- & --- & 0.892 & 0.960 \\
SP & \xmark & \xmark & \xmark & \textbf{0.915} & 0.950 \\
DP (hard cutoff) & \cmark & \cmark & \xmark & 0.908 & 0.957 \\
DP (fixed-strength) & \cmark & \xmark & \cmark & 0.900 & 0.952 \\
\textbf{DP (full)} & \cmark & \cmark & \cmark & 0.902 & \textbf{0.962} \\
\bottomrule
\end{tabular}
}
\end{table}

The main-text ablation (Section~\ref{subsec:ablation_dp}) isolates DP's \emph{hierarchical} query/document design. DP additionally introduces two mechanisms: cosine \emph{scheduling} of the sampling strength, and \emph{soft} sampling that keeps low-scoring examples accessible with non-zero probability. To disentangle these, we run controlled single-component ablations on FEVER (Table~\ref{tab:ablation_components}), each removing one component while holding the others fixed:

\begin{itemize}
    \item \textbf{DP (hard cutoff)} removes soft sampling by hard-excluding below-threshold documents. NDCG improves (0.908) but Recall drops (0.957 vs.\ 0.962 for full DP): hard exclusion aids ranking at the cost of coverage.
    \item \textbf{DP (fixed-strength)} removes the schedule by fixing $\alpha{=}4,\beta{=}5,v{=}0.25$. Recall degrades to 0.952, below FT (0.960): without the self-correcting cosine schedule, performance becomes sensitive to the fixed strength values, whose suboptimality compounds over training.
    \item \textbf{SP} removes all three components (no hierarchy, no schedule, no soft sampling). It attains the best NDCG (0.915) but the worst Recall (0.950).
\end{itemize}

Full DP is the only variant that improves over FT on \emph{both} NDCG and Recall simultaneously, confirming that each component contributes to resolving the quality-coverage tradeoff rather than any single one being solely responsible for the gains.

\subsection{Additional Retriever Backbone: BGE-M3}
\label{appendix:bgem3}

\begin{table}[t]
\centering
\caption{OPERA on BGE-M3 (FiQA), a multilingual XLM-RoBERTa-based retriever with a different backbone, model size, and pretraining recipe from bge-large-en-v1.5. The same quality-coverage tradeoff reproduces: SP improves ranking but degrades recall, while DP improves both over FT. Best in \textbf{bold}.}
\label{tab:bgem3_results}
\resizebox{0.55\textwidth}{!}{
\begin{tabular}{llcc}
\toprule
\textbf{Model} & \textbf{Method} & \textbf{NDCG@10} & \textbf{Recall@20} \\
\midrule
\multirow{3}{*}{BGE-M3} & FT & 0.466 & 0.599 \\
& SP & \textbf{0.484} & 0.579 \\
& DP & 0.475 & \textbf{0.606} \\
\bottomrule
\end{tabular}
}
\end{table}

Top open-source retrievers on the MTEB leaderboard fall largely into two families that our two primary backbones already span: BERT-style encoders (bge-large-en-v1.5) and LLM-based embeddings (Qwen3-Embedding). To further probe generality, and following the reviewers' suggestion of a late-interaction style model, we add BGE-M3~\cite{bgem3} as a third backbone. BGE-M3 is built on XLM-RoBERTa-large, natively supports multi-vector (ColBERT-style) late interaction alongside dense and sparse retrieval, and differs from bge-large-en-v1.5 in training data (multilingual), model size (568M), and pretraining recipe. Table~\ref{tab:bgem3_results} reports FiQA results: the same quality-coverage tradeoff reproduces, with SP improving NDCG@10 (0.484 vs.\ 0.466) but degrading Recall@20 (0.579 vs.\ 0.599), while DP improves \emph{both} over FT (0.475 NDCG@10, 0.606 Recall@20). This confirms that OPERA's mechanism is not specific to a single backbone and generalizes to a multilingual model architecture. We use the same DP hyperparameters as the BGE experiments without additional tuning.

\subsection{Reversed-Schedule Ablation}
\label{appendix:reversed_schedule}

\begin{table}[h]
\centering
\caption{Effect of query sampling-strength schedule direction on NFCorpus. The default direction (broad$\to$selective, $\alpha$: $2\to5$) and the reversed direction ($\alpha$: $5\to2$) both outperform FT; the schedule direction acts as a precision/recall knob. Best in \textbf{bold}.}
\label{tab:reversed_schedule}
\resizebox{0.6\textwidth}{!}{
\begin{tabular}{lcc}
\toprule
\textbf{Method} & \textbf{NDCG@10} & \textbf{Recall@20} \\
\midrule
FT & 0.466 & 0.300 \\
SP & \textbf{0.491} & 0.267 \\
DP ($\alpha$: $2\to5$, default) & 0.480 & 0.304 \\
DP ($\alpha$: $5\to2$, reversed) & 0.476 & \textbf{0.310} \\
\bottomrule
\end{tabular}
}
\end{table}

DP's default schedule increases query sampling strength over training ($\alpha$: $2\to5$), reflecting growing trust in the model's quality signal (Section~\ref{subsec:dp}). To test sensitivity to the schedule direction, we compare against a reversed schedule ($\alpha$: $5\to2$, i.e., the standard curriculum direction of selective$\to$broad) on NFCorpus (Table~\ref{tab:reversed_schedule}). Both directions outperform FT on both metrics. The default (broad$\to$selective) direction achieves higher NDCG, supporting our intuition that sharpening selection toward high-quality examples as model scoring improves benefits ranking, while the reversed direction trades a little NDCG for slightly higher Recall by preserving more coverage late in training. This shows the method is robust to the schedule direction, and that the direction provides a meaningful precision/recall tradeoff knob.

\subsection{Per-Query Analysis by Difficulty}
\label{appendix:per_query}

\begin{table}[h]
\centering
\caption{Per-query analysis on FiQA (648 test queries) stratified into difficulty tertiles by the pretrained model's mean cosine similarity to positive documents (hard $\le 0.670 <$ medium $\le 0.717 <$ easy), evaluated at 4{,}000 iterations. DP improves over FT on hard and medium queries (where SP degrades), and both SP and DP improve on easy queries; DP's gains are distributed across all difficulty levels rather than concentrated on easy queries. Best in \textbf{bold}.}
\label{tab:per_query_difficulty}
\resizebox{0.9\textwidth}{!}{
\begin{tabular}{lc ccc @{\hspace{16pt}} ccc}
\toprule
& & \multicolumn{3}{c}{\textit{NDCG@10}} & \multicolumn{3}{c}{\textit{Recall@10}} \\
\cmidrule(lr){3-5} \cmidrule(lr){6-8}
\textbf{Bucket} & \textbf{n} & \textbf{FT} & \textbf{SP} & \textbf{DP} & \textbf{FT} & \textbf{SP} & \textbf{DP} \\
\midrule
Hard   & 216 & 0.209 & 0.203 & \textbf{0.212} & 0.254 & 0.250 & \textbf{0.254} \\
Medium & 216 & 0.438 & 0.435 & \textbf{0.447} & \textbf{0.524} & 0.520 & 0.522 \\
Easy   & 216 & 0.716 & \textbf{0.756} & 0.736 & 0.809 & 0.840 & \textbf{0.845} \\
\midrule
All    & 648 & 0.454 & \textbf{0.465} & \textbf{0.465} & 0.529 & 0.537 & \textbf{0.540} \\
\bottomrule
\end{tabular}
}
\end{table}

Average metrics could mask a scenario where pruning improves only ``easy'' queries at the expense of ``hard'' ones. To rule this out, we stratify the 648 FiQA test queries into difficulty tertiles by the pretrained model's mean cosine similarity to their positive documents (hard $\le 0.670 <$ medium $\le 0.717 <$ easy) and report per-bucket metrics at 4{,}000 iterations (Table~\ref{tab:per_query_difficulty}; not fully converged, so absolute differences are expected to grow with full training). DP improves over FT on hard and medium queries for NDCG@10 ($+0.003$, $+0.009$), where SP slightly degrades, confirming that DP does not sacrifice hard-query performance for easy-query gains. On easy queries, both SP and DP improve substantially over FT (SP strongest on NDCG, DP strongest on Recall). Overall, DP achieves the best NDCG@10 and Recall@10, with gains distributed across all difficulty levels rather than concentrated on a specific subset of queries.

\section{Detailed bge-large-en-v1.5 Results}
\label{appendix:bge_detailed}

We present full evaluation results for bge-large-en-v1.5 across all eight datasets at multiple training iterations (0 to 32,000). Each table reports NDCG, Recall, and Success at cutoffs of \{10, 20, 100\} for all methods: FT, RP (Random Pruning with the same retention rate as SP), InfoBatch, SP, and DP. The subscript in RP and SP denotes the data retention rate $k$ (e.g., RP25 retains 25\% of training pairs selected randomly, SP25 retains the top 25\% by similarity). FiQA additionally includes SP50 and RP50 variants due to its smaller dataset size. Results are shown for NFCorpus (Table~\ref{tab:nfcorpus_full}), FiQA (Table~\ref{tab:fiqa_full}), ANTIQUE (Table~\ref{tab:antique_full}), TriviaQA (Table~\ref{tab:triviaqa_full}), TripClick head (Table~\ref{tab:tripclick_head_full}), TripClick torso (Table~\ref{tab:tripclick_torso_full}), FEVER (Table~\ref{tab:fever_full}), and HotpotQA (Table~\ref{tab:hotpotqa_full}).

\section{Detailed Qwen3-Embedding-0.6B Results}
\label{appendix:qwen3_detailed}

We present comprehensive evaluation results for Qwen3-Embedding-0.6B across four datasets at 2,000 training iterations: NFCorpus (Table~\ref{tab:qwen3_detail_nfcorpus}), FiQA (Table~\ref{tab:qwen3_detail_fiqa}), ANTIQUE (Table~\ref{tab:qwen3_detail_antique}), and TriviaQA (Table~\ref{tab:qwen3_detail_triviaqa}). We report all evaluation metrics (MRR, NDCG, Recall, and Success at cutoffs of \{1, 5, 10, 20, 50, 100\}) for each configuration. The pretrained model results are iteration-independent and shown as a reference in each table. Notably, for FiQA (Table~\ref{tab:qwen3_detail_fiqa}), the pretrained model outperforms all finetuned methods across all metrics, suggesting that Qwen3-Embedding's pretraining corpus already contains high-quality financial domain data.

\section{Limitations}
\label{appendix:limitations}

We identify several directions for future work. Our LLM-based experiments use Qwen3-Embedding-0.6B as initial evidence; extending to larger models (e.g., 7B, 14B parameters) would strengthen the generalizability claim, though this is cost-prohibitive: both corpus indexing (embedding millions of documents) and training scale with model size, making an 8B-scale experiment suite roughly $10\times$ more expensive than our 0.6B setup. We note that the consistent quality-coverage tradeoff across three backbones spanning two architecture families (encoder-only BGE, decoder-based Qwen3, and multilingual XLM-RoBERTa-based BGE-M3; Appendix~\ref{appendix:bgem3}) suggests the underlying principle is architecture-agnostic, and the tradeoff arises from the two-stage sampling structure rather than model size. Our evaluation covers eight English-language text retrieval datasets across six domains; applying OPERA to genuinely multilingual retrieval benchmarks (e.g., MIRACL) and multimodal settings is a natural extension. We expect the quality-coverage tradeoff to be \emph{amplified} in multilingual settings, where high-quality pairs are likely concentrated in resource-rich languages while low-resource languages have sparser, noisier annotations; static pruning would disproportionately remove low-resource examples and degrade cross-lingual recall, whereas DP's soft modulation maintains exposure to all data while prioritizing high-quality pairs. While we report results under identical hyperparameters (optimized for the FT baseline of BGE) but use a higher learning rate for Qwen3 due to the substantially higher computational cost of LLM-based retrievers, we do not report variance across random seeds due to the computational cost of running all method-dataset combinations multiple times; however, the consistency of improvements across eight diverse datasets provides evidence of robustness beyond what single-dataset variance estimates would capture. Finally, OPERA's dynamic pruning hyperparameters (sampling strengths, cutoff ratios, and scheduling) were not tuned due to computational constraints; exploring different configurations or adaptive scheduling strategies may yield further gains.

\begin{table}
\centering
\resizebox{0.7\columnwidth}{!}{
\begin{tabular}{lccccccc}
\hline
Metric & \#iterations & FT & RP25 & InfoBatch & SP25 & DP \\
\hline
\multirow{8}{*}{NDCG@10} 
& 0 & 0.45101 & 0.45101 & 0.45101 & 0.45101 & 0.45101 \\
& 500 & 0.45608 & 0.45768 & 0.46132 & 0.47209 & 0.46939 \\
& 1000 & 0.46758 & 0.46583 & 0.46669 & 0.47429 & 0.47479 \\
& 2000 & 0.47059 & 0.46631 & 0.47209 & 0.47979 & 0.47418 \\
& 4000 & 0.47465 & 0.47516 & 0.47438 & 0.48804 & 0.48388 \\
& 8000 & 0.47471 & 0.45486 & 0.47866 & 0.49103 & 0.49008 \\
& 16000 & 0.4802 & 0.43449 & 0.48371 & 0.48671 & 0.48067 \\
& 32000 & 0.46565 & 0.42648 & 0.46965 & 0.49128 & 0.48019 \\
\hline
\multirow{8}{*}{Recall@10} 
& 0 & 0.18786 & 0.18786 & 0.18786 & 0.18786 & 0.18786 \\
& 500 & 0.19205 & 0.19057 & 0.1934 & 0.19968 & 0.19693 \\
& 1000 & 0.20346 & 0.19951 & 0.20231 & 0.198 & 0.20538 \\
& 2000 & 0.20952 & 0.20231 & 0.20912 & 0.20422 & 0.20623 \\
& 4000 & 0.21562 & 0.20594 & 0.21102 & 0.20835 & 0.21765 \\
& 8000 & 0.21083 & 0.20026 & 0.21204 & 0.21186 & 0.21774 \\
& 16000 & 0.21814 & 0.18932 & 0.22421 & 0.20753 & 0.22189 \\
& 32000 & 0.22654 & 0.18213 & 0.23186 & 0.21049 & 0.2272 \\
\hline
\multirow{8}{*}{Success@10} 
& 0 & 0.75542 & 0.75542 & 0.75542 & 0.75542 & 0.75542 \\
& 500 & 0.74923 & 0.73994 & 0.74613 & 0.75851 & 0.75851 \\
& 1000 & 0.75542 & 0.74613 & 0.75232 & 0.75542 & 0.76471 \\
& 2000 & 0.76161 & 0.75542 & 0.76161 & 0.76471 & 0.76161 \\
& 4000 & 0.7709 & 0.77709 & 0.76471 & 0.77709 & 0.77709 \\
& 8000 & 0.77709 & 0.76471 & 0.78019 & 0.78019 & 0.7678 \\
& 16000 & 0.76471 & 0.73684 & 0.76161 & 0.76471 & 0.78019 \\
& 32000 & 0.75851 & 0.72446 & 0.74923 & 0.75232 & 0.74613 \\
\hline
\multirow{8}{*}{NDCG@20} 
& 0 & 0.46963 & 0.46963 & 0.46963 & 0.46963 & 0.46963 \\
& 500 & 0.47225 & 0.47045 & 0.47562 & 0.48777 & 0.4863 \\
& 1000 & 0.48135 & 0.47612 & 0.48132 & 0.493 & 0.49069 \\
& 2000 & 0.48129 & 0.48145 & 0.4836 & 0.50005 & 0.48827 \\
& 4000 & 0.48389 & 0.48097 & 0.48472 & 0.50418 & 0.49289 \\
& 8000 & 0.48058 & 0.46252 & 0.48539 & 0.50196 & 0.50145 \\
& 16000 & 0.48987 & 0.44691 & 0.49305 & 0.50082 & 0.49279 \\
& 32000 & 0.476 & 0.43441 & 0.48058 & 0.4989 & 0.49261 \\
\hline
\multirow{8}{*}{Recall@20} 
& 0 & 0.23235 & 0.23235 & 0.23235 & 0.23235 & 0.23235 \\
& 500 & 0.24643 & 0.23986 & 0.24754 & 0.24674 & 0.25311 \\
& 1000 & 0.25214 & 0.24472 & 0.2516 & 0.25396 & 0.25597 \\
& 2000 & 0.2572 & 0.25875 & 0.25863 & 0.25857 & 0.25766 \\
& 4000 & 0.2667 & 0.25901 & 0.26489 & 0.26086 & 0.26595 \\
& 8000 & 0.27199 & 0.25249 & 0.27128 & 0.25804 & 0.27595 \\
& 16000 & 0.29227 & 0.24643 & 0.29663 & 0.26297 & 0.292 \\
& 32000 & 0.30031 & 0.24182 & 0.30361 & 0.2671 & 0.30398 \\
\hline
\multirow{8}{*}{Success@20} 
& 0 & 0.80495 & 0.80495 & 0.80495 & 0.80495 & 0.80495 \\
& 500 & 0.79876 & 0.79567 & 0.80186 & 0.80186 & 0.81115 \\
& 1000 & 0.79567 & 0.78947 & 0.79567 & 0.81734 & 0.80805 \\
& 2000 & 0.79876 & 0.80495 & 0.79876 & 0.81734 & 0.80186 \\
& 4000 & 0.81734 & 0.80805 & 0.80805 & 0.80805 & 0.81424 \\
& 8000 & 0.81115 & 0.80805 & 0.82353 & 0.81424 & 0.81424 \\
& 16000 & 0.82353 & 0.78328 & 0.82663 & 0.81115 & 0.81734 \\
& 32000 & 0.79567 & 0.76161 & 0.79257 & 0.80186 & 0.80805 \\
\hline
\multirow{8}{*}{NDCG@100} 
& 0 & 0.5908 & 0.5908 & 0.5908 & 0.5908 & 0.5908 \\
& 500 & 0.59424 & 0.59384 & 0.59712 & 0.60724 & 0.60158 \\
& 1000 & 0.60187 & 0.5992 & 0.60232 & 0.60983 & 0.60755 \\
& 2000 & 0.60681 & 0.60372 & 0.60681 & 0.61762 & 0.6134 \\
& 4000 & 0.61242 & 0.61026 & 0.61227 & 0.62192 & 0.62047 \\
& 8000 & 0.61675 & 0.5967 & 0.62023 & 0.6202 & 0.62778 \\
& 16000 & 0.61673 & 0.58437 & 0.61839 & 0.62012 & 0.61766 \\
& 32000 & 0.60532 & 0.57457 & 0.60861 & 0.61819 & 0.61702 \\
\hline
\multirow{8}{*}{Recall@100} 
& 0 & 0.36386 & 0.36386 & 0.36386 & 0.36386 & 0.36386 \\
& 500 & 0.39106 & 0.39251 & 0.39036 & 0.38694 & 0.39014 \\
& 1000 & 0.40479 & 0.4013 & 0.40445 & 0.3935 & 0.40153 \\
& 2000 & 0.4134 & 0.40905 & 0.41078 & 0.39942 & 0.41369 \\
& 4000 & 0.43184 & 0.42072 & 0.43176 & 0.40591 & 0.4305 \\
& 8000 & 0.46136 & 0.43142 & 0.46095 & 0.40966 & 0.45188 \\
& 16000 & 0.48035 & 0.43325 & 0.47794 & 0.41957 & 0.47781 \\
& 32000 & 0.50295 & 0.42943 & 0.49731 & 0.42292 & 0.49686 \\
\hline
\multirow{8}{*}{Success@100} 
& 0 & 0.89474 & 0.89474 & 0.89474 & 0.89474 & 0.89474 \\
& 500 & 0.89783 & 0.89474 & 0.89783 & 0.89474 & 0.88854 \\
& 1000 & 0.89164 & 0.89474 & 0.89474 & 0.89783 & 0.88854 \\
& 2000 & 0.90093 & 0.89783 & 0.89474 & 0.90402 & 0.89783 \\
& 4000 & 0.90402 & 0.90093 & 0.89474 & 0.89783 & 0.90712 \\
& 8000 & 0.89474 & 0.89474 & 0.89474 & 0.88545 & 0.89783 \\
& 16000 & 0.88235 & 0.89164 & 0.87926 & 0.89783 & 0.89783 \\
& 32000 & 0.88854 & 0.88854 & 0.88545 & 0.89164 & 0.89783 \\
\hline
\end{tabular}
}
\caption{Full Results on NFCorpus.}
\label{tab:nfcorpus_full}
\end{table}
\begin{table}
\centering
\resizebox{0.7\columnwidth}{!}{
\begin{tabular}{lccccccc}
\hline
Metric & \#iterations & FT & RP50 & InfoBatch & SP25 & SP50 & DP \\
\hline
\multirow{7}{*}{NDCG@10} 
& 0 & 0.48919 & 0.48919 & 0.48919 & 0.48919 & 0.48919 & 0.48919 \\
& 500 & 0.5073 & 0.50502 & 0.50971 & 0.50296 & 0.50469 & 0.50404 \\
& 1000 & 0.5085 & 0.50659 & 0.50761 & 0.50241 & 0.50876 & 0.50655 \\
& 2000 & 0.51277 & 0.50653 & 0.51195 & 0.50229 & 0.51293 & 0.51212 \\
& 4000 & 0.5148 & 0.5026 & 0.51559 & 0.51047 & 0.51555 & 0.51812 \\
& 8000 & 0.51359 & 0.50572 & 0.5137 & 0.511 & 0.5162 & 0.52418 \\
& 16000 & 0.50859 & 0.49985 & 0.50758 & 0.51213 & 0.51463 & 0.51715 \\
\hline
\multirow{7}{*}{Recall@10} 
& 0 & 0.51432 & 0.51432 & 0.51432 & 0.51432 & 0.51432 & 0.51432 \\
& 500 & 0.53982 & 0.53969 & 0.53958 & 0.53143 & 0.5367 & 0.54259 \\
& 1000 & 0.54812 & 0.5451 & 0.54982 & 0.53329 & 0.54211 & 0.54512 \\
& 2000 & 0.559 & 0.54228 & 0.55506 & 0.53052 & 0.54561 & 0.5546 \\
& 4000 & 0.55662 & 0.54005 & 0.55558 & 0.54067 & 0.54815 & 0.56327 \\
& 8000 & 0.54711 & 0.53811 & 0.54964 & 0.53731 & 0.54906 & 0.55887 \\
& 16000 & 0.54124 & 0.52443 & 0.53852 & 0.54046 & 0.5469 & 0.55478 \\
\hline
\multirow{7}{*}{Success@10} 
& 0 & 0.72685 & 0.72685 & 0.72685 & 0.72685 & 0.72685 & 0.72685 \\
& 500 & 0.74383 & 0.74383 & 0.74691 & 0.73611 & 0.74228 & 0.74846 \\
& 1000 & 0.75154 & 0.74537 & 0.75 & 0.74074 & 0.74691 & 0.74846 \\
& 2000 & 0.7608 & 0.74383 & 0.75772 & 0.73457 & 0.74846 & 0.7608 \\
& 4000 & 0.76235 & 0.75 & 0.76389 & 0.74074 & 0.75 & 0.76543 \\
& 8000 & 0.75463 & 0.75154 & 0.75926 & 0.73457 & 0.74537 & 0.76389 \\
& 16000 & 0.74691 & 0.74074 & 0.74537 & 0.73765 & 0.74383 & 0.76389 \\
\hline
\multirow{7}{*}{NDCG@20} 
& 0 & 0.52218 & 0.52218 & 0.52218 & 0.52218 & 0.52218 & 0.52218 \\
& 500 & 0.54164 & 0.5384 & 0.54391 & 0.53541 & 0.53779 & 0.53734 \\
& 1000 & 0.54275 & 0.54106 & 0.54071 & 0.53541 & 0.54142 & 0.54203 \\
& 2000 & 0.54417 & 0.54016 & 0.54539 & 0.53641 & 0.54587 & 0.54416 \\
& 4000 & 0.54578 & 0.53642 & 0.54629 & 0.54136 & 0.54946 & 0.54855 \\
& 8000 & 0.54723 & 0.54071 & 0.54671 & 0.54296 & 0.54924 & 0.5539 \\
& 16000 & 0.54209 & 0.53193 & 0.54116 & 0.54289 & 0.54821 & 0.54795 \\
\hline
\multirow{7}{*}{Recall@20} 
& 0 & 0.60161 & 0.60161 & 0.60161 & 0.60161 & 0.60161 & 0.60161 \\
& 500 & 0.63196 & 0.62872 & 0.63162 & 0.61578 & 0.62241 & 0.63024 \\
& 1000 & 0.6369 & 0.63292 & 0.6364 & 0.61609 & 0.62523 & 0.63847 \\
& 2000 & 0.63903 & 0.62874 & 0.64027 & 0.61811 & 0.62882 & 0.63727 \\
& 4000 & 0.63808 & 0.63194 & 0.63514 & 0.62047 & 0.63412 & 0.64222 \\
& 8000 & 0.63933 & 0.63279 & 0.63994 & 0.61994 & 0.63041 & 0.63909 \\
& 16000 & 0.62971 & 0.61118 & 0.63012 & 0.61941 & 0.63243 & 0.63809 \\
\hline
\multirow{7}{*}{Success@20} 
& 0 & 0.79321 & 0.79321 & 0.79321 & 0.79321 & 0.79321 & 0.79321 \\
& 500 & 0.81636 & 0.81327 & 0.81636 & 0.80093 & 0.81019 & 0.81481 \\
& 1000 & 0.82407 & 0.82253 & 0.82407 & 0.80401 & 0.81019 & 0.82716 \\
& 2000 & 0.82562 & 0.81327 & 0.82407 & 0.80556 & 0.81944 & 0.82562 \\
& 4000 & 0.82099 & 0.81481 & 0.8179 & 0.8071 & 0.82407 & 0.82407 \\
& 8000 & 0.8287 & 0.81327 & 0.82562 & 0.80864 & 0.81944 & 0.82099 \\
& 16000 & 0.8179 & 0.80247 & 0.81481 & 0.8071 & 0.8179 & 0.82253 \\
\hline
\multirow{7}{*}{NDCG@100} 
& 0 & 0.57052 & 0.57052 & 0.57052 & 0.57052 & 0.57052 & 0.57052 \\
& 500 & 0.58557 & 0.58211 & 0.58748 & 0.57741 & 0.58011 & 0.58142 \\
& 1000 & 0.58628 & 0.58487 & 0.58404 & 0.57796 & 0.58414 & 0.58462 \\
& 2000 & 0.58806 & 0.58638 & 0.58894 & 0.57922 & 0.58792 & 0.58797 \\
& 4000 & 0.59107 & 0.58097 & 0.59279 & 0.58342 & 0.59066 & 0.59249 \\
& 8000 & 0.59063 & 0.58115 & 0.5907 & 0.58538 & 0.59152 & 0.59951 \\
& 16000 & 0.58399 & 0.57598 & 0.58324 & 0.58426 & 0.58982 & 0.59 \\
\hline
\multirow{7}{*}{Recall@100} 
& 0 & 0.77048 & 0.77048 & 0.77048 & 0.77048 & 0.77048 & 0.77048 \\
& 500 & 0.78785 & 0.78461 & 0.78634 & 0.75805 & 0.76858 & 0.78739 \\
& 1000 & 0.79259 & 0.79034 & 0.79133 & 0.76416 & 0.77228 & 0.7897 \\
& 2000 & 0.7951 & 0.7905 & 0.79659 & 0.76461 & 0.77901 & 0.79448 \\
& 4000 & 0.79596 & 0.78875 & 0.79938 & 0.76684 & 0.78161 & 0.79421 \\
& 8000 & 0.79368 & 0.77385 & 0.79628 & 0.76688 & 0.78176 & 0.80139 \\
& 16000 & 0.7764 & 0.76227 & 0.77643 & 0.76078 & 0.77788 & 0.78376 \\
\hline
\multirow{7}{*}{Success@100} 
& 0 & 0.89506 & 0.89506 & 0.89506 & 0.89506 & 0.89506 & 0.89506 \\
& 500 & 0.90586 & 0.89969 & 0.90432 & 0.89352 & 0.89815 & 0.90586 \\
& 1000 & 0.90895 & 0.90586 & 0.90586 & 0.89506 & 0.89969 & 0.90432 \\
& 2000 & 0.90895 & 0.90741 & 0.90895 & 0.89352 & 0.90432 & 0.90586 \\
& 4000 & 0.90895 & 0.90278 & 0.91204 & 0.89815 & 0.90432 & 0.90741 \\
& 8000 & 0.91049 & 0.8966 & 0.91358 & 0.89969 & 0.90432 & 0.91358 \\
& 16000 & 0.90123 & 0.8966 & 0.89969 & 0.89506 & 0.90278 & 0.90586 \\
\hline
\end{tabular}
}
\caption{Full Results on FiQA.}
\label{tab:fiqa_full}
\end{table}
\begin{table}
\centering
\resizebox{0.7\columnwidth}{!}{
\begin{tabular}{lccccccc}
\hline
Metric & \#iterations & FT & RP25 & InfoBatch & SP25 & DP \\
\hline
\multirow{8}{*}{NDCG@10} 
& 0 & 0.54361 & 0.54361 & 0.54361 & 0.54361 & 0.54361 \\
& 500 & 0.51341 & 0.50884 & 0.51514 & 0.555 & 0.54143 \\
& 1000 & 0.53297 & 0.52488 & 0.53106 & 0.56221 & 0.54905 \\
& 2000 & 0.55202 & 0.54677 & 0.55086 & 0.56385 & 0.56028 \\
& 4000 & 0.57041 & 0.56046 & 0.5673 & 0.56475 & 0.5814 \\
& 8000 & 0.5756 & 0.55501 & 0.56865 & 0.56304 & 0.58932 \\
& 16000 & 0.57495 & 0.55635 & 0.57182 & 0.56404 & 0.58987 \\
& 32000 & 0.5704 & 0.54994 & 0.56128 & 0.56153 & 0.5803 \\
\hline
\multirow{8}{*}{Recall@10} 
& 0 & 0.3187 & 0.3187 & 0.3187 & 0.3187 & 0.3187 \\
& 500 & 0.30215 & 0.30074 & 0.30394 & 0.32731 & 0.3211 \\
& 1000 & 0.31274 & 0.30605 & 0.31297 & 0.33135 & 0.32435 \\
& 2000 & 0.31647 & 0.3128 & 0.31478 & 0.33139 & 0.32999 \\
& 4000 & 0.32797 & 0.31715 & 0.32603 & 0.33024 & 0.33619 \\
& 8000 & 0.32685 & 0.30934 & 0.32143 & 0.32961 & 0.33503 \\
& 16000 & 0.32654 & 0.30758 & 0.32119 & 0.33087 & 0.33564 \\
& 32000 & 0.3201 & 0.303 & 0.31485 & 0.32149 & 0.3269 \\
\hline
\multirow{8}{*}{Success@10} 
& 0 & 0.915 & 0.915 & 0.915 & 0.915 & 0.915 \\
& 500 & 0.915 & 0.915 & 0.91 & 0.925 & 0.92 \\
& 1000 & 0.92 & 0.93 & 0.925 & 0.925 & 0.92 \\
& 2000 & 0.93 & 0.93 & 0.93 & 0.925 & 0.925 \\
& 4000 & 0.93 & 0.93 & 0.93 & 0.925 & 0.94 \\
& 8000 & 0.94 & 0.925 & 0.935 & 0.925 & 0.935 \\
& 16000 & 0.94 & 0.92 & 0.93 & 0.925 & 0.94 \\
& 32000 & 0.93 & 0.92 & 0.935 & 0.92 & 0.935 \\
\hline
\multirow{8}{*}{NDCG@20} 
& 0 & 0.58938 & 0.58938 & 0.58938 & 0.58938 & 0.58938 \\
& 500 & 0.55908 & 0.55891 & 0.56392 & 0.60371 & 0.58903 \\
& 1000 & 0.58022 & 0.57683 & 0.57841 & 0.60726 & 0.59685 \\
& 2000 & 0.60082 & 0.59966 & 0.59985 & 0.60895 & 0.60292 \\
& 4000 & 0.61127 & 0.60587 & 0.61113 & 0.61126 & 0.62214 \\
& 8000 & 0.61673 & 0.6027 & 0.61021 & 0.60916 & 0.63131 \\
& 16000 & 0.61622 & 0.60262 & 0.61349 & 0.60949 & 0.62942 \\
& 32000 & 0.61282 & 0.59419 & 0.60292 & 0.60826 & 0.6267 \\
\hline
\multirow{8}{*}{Recall@20} 
& 0 & 0.41377 & 0.41377 & 0.41377 & 0.41377 & 0.41377 \\
& 500 & 0.38463 & 0.38752 & 0.38937 & 0.43058 & 0.41207 \\
& 1000 & 0.3951 & 0.39427 & 0.39699 & 0.43161 & 0.41464 \\
& 2000 & 0.40321 & 0.40246 & 0.4028 & 0.43278 & 0.41329 \\
& 4000 & 0.40542 & 0.40033 & 0.40752 & 0.43143 & 0.4189 \\
& 8000 & 0.40418 & 0.39583 & 0.39742 & 0.42876 & 0.41904 \\
& 16000 & 0.40475 & 0.38942 & 0.3982 & 0.42807 & 0.41345 \\
& 32000 & 0.39887 & 0.38136 & 0.38788 & 0.41849 & 0.41037 \\
\hline
\multirow{8}{*}{Success@20} 
& 0 & 0.94 & 0.94 & 0.94 & 0.94 & 0.94 \\
& 500 & 0.935 & 0.935 & 0.93 & 0.945 & 0.93 \\
& 1000 & 0.93 & 0.93 & 0.935 & 0.95 & 0.935 \\
& 2000 & 0.94 & 0.935 & 0.945 & 0.95 & 0.945 \\
& 4000 & 0.95 & 0.95 & 0.95 & 0.955 & 0.95 \\
& 8000 & 0.95 & 0.95 & 0.955 & 0.95 & 0.955 \\
& 16000 & 0.945 & 0.94 & 0.955 & 0.95 & 0.955 \\
& 32000 & 0.945 & 0.935 & 0.95 & 0.94 & 0.955 \\
\hline
\multirow{8}{*}{NDCG@100} 
& 0 & 0.70963 & 0.70963 & 0.70963 & 0.70963 & 0.70963 \\
& 500 & 0.68968 & 0.68743 & 0.69259 & 0.72254 & 0.71339 \\
& 1000 & 0.70339 & 0.69928 & 0.70178 & 0.72514 & 0.72015 \\
& 2000 & 0.71801 & 0.71431 & 0.71782 & 0.72624 & 0.72481 \\
& 4000 & 0.72655 & 0.71896 & 0.72348 & 0.72845 & 0.73586 \\
& 8000 & 0.72726 & 0.71366 & 0.7236 & 0.72757 & 0.73949 \\
& 16000 & 0.72375 & 0.71325 & 0.72022 & 0.72788 & 0.73581 \\
& 32000 & 0.71905 & 0.70656 & 0.70943 & 0.72737 & 0.73014 \\
\hline
\multirow{8}{*}{Recall@100} 
& 0 & 0.61225 & 0.61225 & 0.61225 & 0.61225 & 0.61225 \\
& 500 & 0.59569 & 0.59606 & 0.60046 & 0.62704 & 0.61638 \\
& 1000 & 0.59057 & 0.59225 & 0.59289 & 0.62929 & 0.61486 \\
& 2000 & 0.59087 & 0.58942 & 0.58999 & 0.63046 & 0.61596 \\
& 4000 & 0.58692 & 0.58056 & 0.58274 & 0.63033 & 0.60371 \\
& 8000 & 0.57344 & 0.56642 & 0.5742 & 0.63086 & 0.59186 \\
& 16000 & 0.56939 & 0.55861 & 0.5599 & 0.62933 & 0.58018 \\
& 32000 & 0.56019 & 0.55254 & 0.55349 & 0.61963 & 0.57318 \\
\hline
\multirow{8}{*}{Success@100} 
& 0 & 0.97 & 0.97 & 0.97 & 0.97 & 0.97 \\
& 500 & 0.97 & 0.975 & 0.965 & 0.975 & 0.98 \\
& 1000 & 0.975 & 0.975 & 0.975 & 0.975 & 0.98 \\
& 2000 & 0.975 & 0.97 & 0.975 & 0.975 & 0.975 \\
& 4000 & 0.975 & 0.97 & 0.975 & 0.975 & 0.975 \\
& 8000 & 0.975 & 0.97 & 0.975 & 0.975 & 0.975 \\
& 16000 & 0.975 & 0.965 & 0.97 & 0.975 & 0.975 \\
& 32000 & 0.97 & 0.965 & 0.96 & 0.97 & 0.97 \\
\hline
\end{tabular}
}
\caption{Full Results on ANTIQUE.}
\label{tab:antique_full}
\end{table}
\begin{table}
\centering
\resizebox{0.7\columnwidth}{!}{
\begin{tabular}{lccccccc}
\hline
Metric & \#iterations & FT & RP25 & InfoBatch & SP25 & DP \\
\hline
\multirow{8}{*}{NDCG@10} 
& 0 & 0.48724 & 0.48724 & 0.48724 & 0.48724 & 0.48724 \\
& 500 & 0.48569 & 0.48454 & 0.48815 & 0.49358 & 0.49215 \\
& 1000 & 0.48447 & 0.48167 & 0.48317 & 0.49497 & 0.49271 \\
& 2000 & 0.48155 & 0.47685 & 0.48055 & 0.4958 & 0.49108 \\
& 4000 & 0.47759 & 0.4755 & 0.47892 & 0.49768 & 0.49108 \\
& 8000 & 0.47752 & 0.47493 & 0.47722 & 0.49818 & 0.48562 \\
& 16000 & 0.48116 & 0.47636 & 0.48179 & 0.50124 & 0.49078 \\
& 32000 & 0.48315 & 0.47358 & 0.48287 & 0.50933 & 0.49009 \\
\hline
\multirow{8}{*}{Recall@10} 
& 0 & 0.3418 & 0.3418 & 0.3418 & 0.3418 & 0.3418 \\
& 500 & 0.35714 & 0.3568 & 0.35926 & 0.35052 & 0.35995 \\
& 1000 & 0.35744 & 0.35596 & 0.35722 & 0.35049 & 0.3616 \\
& 2000 & 0.35736 & 0.35347 & 0.35697 & 0.3521 & 0.36049 \\
& 4000 & 0.35613 & 0.35363 & 0.35553 & 0.3541 & 0.36056 \\
& 8000 & 0.3566 & 0.35452 & 0.35697 & 0.35376 & 0.35896 \\
& 16000 & 0.3609 & 0.35589 & 0.36158 & 0.35759 & 0.36408 \\
& 32000 & 0.36281 & 0.35282 & 0.36419 & 0.36243 & 0.36303 \\
\hline
\multirow{8}{*}{Success@10} 
& 0 & 0.81775 & 0.81775 & 0.81775 & 0.81775 & 0.81775 \\
& 500 & 0.82485 & 0.82515 & 0.82456 & 0.81997 & 0.82751 \\
& 1000 & 0.82219 & 0.82145 & 0.82115 & 0.81908 & 0.82751 \\
& 2000 & 0.81893 & 0.81716 & 0.81997 & 0.81953 & 0.82485 \\
& 4000 & 0.81834 & 0.81509 & 0.81538 & 0.8213 & 0.82352 \\
& 8000 & 0.8176 & 0.81538 & 0.81864 & 0.81967 & 0.8213 \\
& 16000 & 0.8216 & 0.81524 & 0.82249 & 0.82086 & 0.82263 \\
& 32000 & 0.82234 & 0.81065 & 0.82293 & 0.82396 & 0.8213 \\
\hline
\multirow{8}{*}{NDCG@20} 
& 0 & 0.51949 & 0.51949 & 0.51949 & 0.51949 & 0.51949 \\
& 500 & 0.51899 & 0.51739 & 0.52001 & 0.52728 & 0.52425 \\
& 1000 & 0.51736 & 0.51491 & 0.51661 & 0.52742 & 0.52532 \\
& 2000 & 0.51454 & 0.51056 & 0.51409 & 0.52841 & 0.52395 \\
& 4000 & 0.51111 & 0.50845 & 0.51221 & 0.53023 & 0.52333 \\
& 8000 & 0.5104 & 0.50777 & 0.51101 & 0.53049 & 0.51719 \\
& 16000 & 0.51354 & 0.50923 & 0.51423 & 0.53244 & 0.52286 \\
& 32000 & 0.51654 & 0.50718 & 0.51534 & 0.5389 & 0.52183 \\
\hline
\multirow{8}{*}{Recall@20} 
& 0 & 0.42887 & 0.42887 & 0.42887 & 0.42887 & 0.42887 \\
& 500 & 0.45515 & 0.4531 & 0.45329 & 0.44183 & 0.4552 \\
& 1000 & 0.45496 & 0.45392 & 0.45569 & 0.43807 & 0.45786 \\
& 2000 & 0.45523 & 0.45265 & 0.45518 & 0.44009 & 0.45636 \\
& 4000 & 0.45337 & 0.45062 & 0.45393 & 0.44354 & 0.45623 \\
& 8000 & 0.45316 & 0.45215 & 0.45542 & 0.44278 & 0.4527 \\
& 16000 & 0.45846 & 0.45417 & 0.46028 & 0.44532 & 0.4598 \\
& 32000 & 0.46303 & 0.45253 & 0.46247 & 0.44889 & 0.4607 \\
\hline
\multirow{8}{*}{Success@20} 
& 0 & 0.85932 & 0.85932 & 0.85932 & 0.85932 & 0.85932 \\
& 500 & 0.8645 & 0.86302 & 0.86479 & 0.86464 & 0.86686 \\
& 1000 & 0.86361 & 0.86302 & 0.86479 & 0.8608 & 0.86672 \\
& 2000 & 0.86154 & 0.86036 & 0.86346 & 0.86154 & 0.86627 \\
& 4000 & 0.8605 & 0.85784 & 0.86154 & 0.86183 & 0.86405 \\
& 8000 & 0.85962 & 0.85873 & 0.8605 & 0.86139 & 0.86065 \\
& 16000 & 0.86124 & 0.85976 & 0.86228 & 0.8605 & 0.86405 \\
& 32000 & 0.86686 & 0.85888 & 0.86391 & 0.86109 & 0.86391 \\
\hline
\multirow{8}{*}{NDCG@100} 
& 0 & 0.62251 & 0.62251 & 0.62251 & 0.62251 & 0.62251 \\
& 500 & 0.62607 & 0.62481 & 0.62697 & 0.62901 & 0.62989 \\
& 1000 & 0.62446 & 0.62222 & 0.62391 & 0.62902 & 0.63085 \\
& 2000 & 0.62246 & 0.61865 & 0.62162 & 0.62995 & 0.62963 \\
& 4000 & 0.61939 & 0.61735 & 0.62012 & 0.63217 & 0.62946 \\
& 8000 & 0.62001 & 0.61741 & 0.62006 & 0.63178 & 0.62407 \\
& 16000 & 0.62246 & 0.61867 & 0.62344 & 0.63372 & 0.62911 \\
& 32000 & 0.62528 & 0.61683 & 0.62496 & 0.63913 & 0.62899 \\
\hline
\multirow{8}{*}{Recall@100} 
& 0 & 0.61507 & 0.61507 & 0.61507 & 0.61507 & 0.61507 \\
& 500 & 0.65763 & 0.65614 & 0.65534 & 0.62793 & 0.65309 \\
& 1000 & 0.65757 & 0.65696 & 0.65847 & 0.62328 & 0.6562 \\
& 2000 & 0.65941 & 0.6572 & 0.6587 & 0.62496 & 0.65451 \\
& 4000 & 0.65889 & 0.65706 & 0.65834 & 0.63125 & 0.65589 \\
& 8000 & 0.66186 & 0.66088 & 0.66144 & 0.62854 & 0.65358 \\
& 16000 & 0.66628 & 0.66374 & 0.66767 & 0.63196 & 0.66051 \\
& 32000 & 0.67237 & 0.66268 & 0.6734 & 0.63364 & 0.66503 \\
\hline
\multirow{8}{*}{Success@100} 
& 0 & 0.91716 & 0.91716 & 0.91716 & 0.91716 & 0.91716 \\
& 500 & 0.92604 & 0.92559 & 0.92352 & 0.91967 & 0.92382 \\
& 1000 & 0.9247 & 0.92367 & 0.92411 & 0.9176 & 0.92485 \\
& 2000 & 0.92441 & 0.92278 & 0.92411 & 0.9182 & 0.92337 \\
& 4000 & 0.92367 & 0.92293 & 0.92293 & 0.92086 & 0.925 \\
& 8000 & 0.92574 & 0.92337 & 0.92456 & 0.91982 & 0.92367 \\
& 16000 & 0.92589 & 0.92322 & 0.92515 & 0.92012 & 0.9253 \\
& 32000 & 0.92885 & 0.92204 & 0.92618 & 0.92071 & 0.92633 \\
\hline
\end{tabular}
}
\caption{Full Results on TriviaQA.}
\label{tab:triviaqa_full}
\end{table}
\begin{table}
\centering
\resizebox{0.7\columnwidth}{!}{
\begin{tabular}{lccccccc}
\hline
Metric & \#iterations & FT & RP25 & InfoBatch & SP25 & DP \\
\hline
\multirow{8}{*}{NDCG@10} 
& 0 & 0.21935 & 0.21935 & 0.21935 & 0.21935 & 0.21935 \\
& 500 & 0.24666 & 0.24734 & 0.24895 & 0.24686 & 0.25367 \\
& 1000 & 0.25109 & 0.24906 & 0.25135 & 0.25058 & 0.25591 \\
& 2000 & 0.25992 & 0.26174 & 0.25768 & 0.25393 & 0.26853 \\
& 4000 & 0.28174 & 0.27356 & 0.27998 & 0.25811 & 0.28696 \\
& 8000 & 0.29324 & 0.27947 & 0.29348 & 0.26045 & 0.30409 \\
& 16000 & 0.30013 & 0.27998 & 0.30069 & 0.26549 & 0.30805 \\
& 32000 & 0.29779 & 0.27999 & 0.29524 & 0.26966 & 0.30926 \\
\hline
\multirow{8}{*}{Recall@10} 
& 0 & 0.11412 & 0.11412 & 0.11412 & 0.11412 & 0.11412 \\
& 500 & 0.13307 & 0.13346 & 0.13356 & 0.13029 & 0.13579 \\
& 1000 & 0.13501 & 0.13431 & 0.13513 & 0.13227 & 0.13827 \\
& 2000 & 0.13819 & 0.13903 & 0.13794 & 0.1331 & 0.14297 \\
& 4000 & 0.14543 & 0.14158 & 0.14237 & 0.13445 & 0.14976 \\
& 8000 & 0.15074 & 0.14488 & 0.1542 & 0.13591 & 0.15744 \\
& 16000 & 0.15776 & 0.1449 & 0.15936 & 0.1378 & 0.1625 \\
& 32000 & 0.15814 & 0.14411 & 0.15668 & 0.13863 & 0.16185 \\
\hline
\multirow{8}{*}{Success@10} 
& 0 & 0.63489 & 0.63489 & 0.63489 & 0.63489 & 0.63489 \\
& 500 & 0.70638 & 0.70383 & 0.70553 & 0.68681 & 0.71319 \\
& 1000 & 0.71489 & 0.71064 & 0.72255 & 0.69787 & 0.71915 \\
& 2000 & 0.73277 & 0.73702 & 0.73447 & 0.70043 & 0.74128 \\
& 4000 & 0.76255 & 0.75319 & 0.76255 & 0.71149 & 0.76766 \\
& 8000 & 0.78638 & 0.75915 & 0.79404 & 0.71489 & 0.79064 \\
& 16000 & 0.7966 & 0.75745 & 0.7966 & 0.72936 & 0.80085 \\
& 32000 & 0.78894 & 0.7583 & 0.78979 & 0.73617 & 0.8 \\
\hline
\multirow{8}{*}{NDCG@20} 
& 0 & 0.26348 & 0.26348 & 0.26348 & 0.26348 & 0.26348 \\
& 500 & 0.29366 & 0.29385 & 0.29566 & 0.29126 & 0.29826 \\
& 1000 & 0.29637 & 0.29634 & 0.29706 & 0.29559 & 0.3002 \\
& 2000 & 0.30455 & 0.30416 & 0.3036 & 0.30001 & 0.31348 \\
& 4000 & 0.326 & 0.31997 & 0.32444 & 0.30256 & 0.33175 \\
& 8000 & 0.33759 & 0.32329 & 0.33554 & 0.30537 & 0.34681 \\
& 16000 & 0.33797 & 0.32148 & 0.33982 & 0.31071 & 0.35008 \\
& 32000 & 0.33739 & 0.32248 & 0.33678 & 0.31654 & 0.35104 \\
\hline
\multirow{8}{*}{Recall@20} 
& 0 & 0.18949 & 0.18949 & 0.18949 & 0.18949 & 0.18949 \\
& 500 & 0.21663 & 0.21695 & 0.21657 & 0.20815 & 0.21731 \\
& 1000 & 0.21899 & 0.21913 & 0.22018 & 0.21162 & 0.21994 \\
& 2000 & 0.22352 & 0.22182 & 0.22289 & 0.21433 & 0.22791 \\
& 4000 & 0.22958 & 0.22733 & 0.22965 & 0.21376 & 0.23769 \\
& 8000 & 0.24058 & 0.22914 & 0.23801 & 0.21579 & 0.24658 \\
& 16000 & 0.24178 & 0.22593 & 0.24426 & 0.21874 & 0.25091 \\
& 32000 & 0.24346 & 0.22538 & 0.2442 & 0.22014 & 0.25207 \\
\hline
\multirow{8}{*}{Success@20} 
& 0 & 0.75319 & 0.75319 & 0.75319 & 0.75319 & 0.75319 \\
& 500 & 0.82128 & 0.81447 & 0.81872 & 0.79234 & 0.81617 \\
& 1000 & 0.81702 & 0.82298 & 0.82128 & 0.8 & 0.81532 \\
& 2000 & 0.82553 & 0.82043 & 0.82638 & 0.80426 & 0.84085 \\
& 4000 & 0.84426 & 0.85021 & 0.84426 & 0.8017 & 0.85702 \\
& 8000 & 0.86468 & 0.85362 & 0.86383 & 0.80936 & 0.87149 \\
& 16000 & 0.86553 & 0.8366 & 0.86553 & 0.81702 & 0.86979 \\
& 32000 & 0.86298 & 0.84 & 0.85957 & 0.81957 & 0.87064 \\
\hline
\multirow{8}{*}{NDCG@100} 
& 0 & 0.4582 & 0.4582 & 0.4582 & 0.4582 & 0.4582 \\
& 500 & 0.49334 & 0.49328 & 0.4953 & 0.48779 & 0.49772 \\
& 1000 & 0.49687 & 0.49633 & 0.49899 & 0.49213 & 0.50135 \\
& 2000 & 0.50644 & 0.5062 & 0.50519 & 0.49567 & 0.51531 \\
& 4000 & 0.52799 & 0.51896 & 0.52622 & 0.49817 & 0.53062 \\
& 8000 & 0.53567 & 0.52151 & 0.53459 & 0.50174 & 0.54524 \\
& 16000 & 0.53825 & 0.51991 & 0.53858 & 0.50578 & 0.5488 \\
& 32000 & 0.53734 & 0.51888 & 0.53583 & 0.50892 & 0.54931 \\
\hline
\multirow{8}{*}{Recall@100} 
& 0 & 0.44129 & 0.44129 & 0.44129 & 0.44129 & 0.44129 \\
& 500 & 0.48578 & 0.48488 & 0.48621 & 0.46503 & 0.48353 \\
& 1000 & 0.49008 & 0.48937 & 0.49335 & 0.46952 & 0.49181 \\
& 2000 & 0.49738 & 0.49507 & 0.49835 & 0.47295 & 0.50321 \\
& 4000 & 0.50844 & 0.50081 & 0.50844 & 0.47397 & 0.51132 \\
& 8000 & 0.51516 & 0.50016 & 0.51674 & 0.47688 & 0.52164 \\
& 16000 & 0.5223 & 0.49831 & 0.52344 & 0.47718 & 0.53231 \\
& 32000 & 0.52231 & 0.4936 & 0.52017 & 0.47535 & 0.53109 \\
\hline
\multirow{8}{*}{Success@100} 
& 0 & 0.91319 & 0.91319 & 0.91319 & 0.91319 & 0.91319 \\
& 500 & 0.93617 & 0.93532 & 0.93617 & 0.92511 & 0.93362 \\
& 1000 & 0.93702 & 0.93957 & 0.94298 & 0.92936 & 0.94128 \\
& 2000 & 0.94468 & 0.94979 & 0.94638 & 0.93191 & 0.95404 \\
& 4000 & 0.95149 & 0.95064 & 0.94979 & 0.93021 & 0.95404 \\
& 8000 & 0.95149 & 0.94809 & 0.94809 & 0.93532 & 0.95234 \\
& 16000 & 0.94894 & 0.94298 & 0.94979 & 0.93702 & 0.95574 \\
& 32000 & 0.94894 & 0.94213 & 0.94723 & 0.93787 & 0.95234 \\
\hline
\end{tabular}
}
\caption{Full Results on TripClick (head).}
\label{tab:tripclick_head_full}
\end{table}
\begin{table}
\centering
\resizebox{0.7\columnwidth}{!}{
\begin{tabular}{lccccccc}
\hline
Metric & \#iterations & FT & RP25 & InfoBatch & SP25 & DP \\
\hline
\multirow{8}{*}{NDCG@10} 
& 0 & 0.20834 & 0.20834 & 0.20834 & 0.20834 & 0.20834 \\
& 500 & 0.21923 & 0.21885 & 0.21742 & 0.2224 & 0.22026 \\
& 1000 & 0.21736 & 0.21728 & 0.21675 & 0.22483 & 0.22031 \\
& 2000 & 0.21753 & 0.21787 & 0.21744 & 0.22763 & 0.22117 \\
& 4000 & 0.22105 & 0.21873 & 0.22282 & 0.22932 & 0.22612 \\
& 8000 & 0.23183 & 0.22836 & 0.23209 & 0.22973 & 0.23327 \\
& 16000 & 0.24165 & 0.2365 & 0.24048 & 0.23194 & 0.24341 \\
& 32000 & 0.24507 & 0.23823 & 0.24798 & 0.23579 & 0.24942 \\
\hline
\multirow{8}{*}{Recall@10} 
& 0 & 0.21973 & 0.21973 & 0.21973 & 0.21973 & 0.21973 \\
& 500 & 0.23999 & 0.23883 & 0.23861 & 0.23966 & 0.24437 \\
& 1000 & 0.24527 & 0.24226 & 0.24388 & 0.24426 & 0.24363 \\
& 2000 & 0.2469 & 0.24785 & 0.2479 & 0.24509 & 0.2488 \\
& 4000 & 0.24662 & 0.24813 & 0.24984 & 0.24903 & 0.25319 \\
& 8000 & 0.25737 & 0.25676 & 0.25413 & 0.25152 & 0.25867 \\
& 16000 & 0.26549 & 0.26493 & 0.26218 & 0.25323 & 0.27092 \\
& 32000 & 0.27042 & 0.26894 & 0.27346 & 0.2574 & 0.27578 \\
\hline
\multirow{8}{*}{Success@10} 
& 0 & 0.52085 & 0.52085 & 0.52085 & 0.52085 & 0.52085 \\
& 500 & 0.54468 & 0.54383 & 0.54128 & 0.53787 & 0.54979 \\
& 1000 & 0.54809 & 0.54809 & 0.54468 & 0.54723 & 0.54383 \\
& 2000 & 0.55149 & 0.55234 & 0.55404 & 0.54809 & 0.55149 \\
& 4000 & 0.55234 & 0.56085 & 0.56511 & 0.54979 & 0.56426 \\
& 8000 & 0.58553 & 0.58383 & 0.58043 & 0.54979 & 0.57277 \\
& 16000 & 0.59574 & 0.58979 & 0.59149 & 0.55745 & 0.59489 \\
& 32000 & 0.59404 & 0.59149 & 0.59745 & 0.57277 & 0.61021 \\
\hline
\multirow{8}{*}{NDCG@20} 
& 0 & 0.26082 & 0.26082 & 0.26082 & 0.26082 & 0.26082 \\
& 500 & 0.27579 & 0.27316 & 0.27365 & 0.2762 & 0.27515 \\
& 1000 & 0.27235 & 0.27205 & 0.27208 & 0.27726 & 0.27659 \\
& 2000 & 0.27267 & 0.27239 & 0.27311 & 0.28023 & 0.27754 \\
& 4000 & 0.27849 & 0.2753 & 0.27812 & 0.28306 & 0.28408 \\
& 8000 & 0.28709 & 0.28263 & 0.28792 & 0.28343 & 0.29082 \\
& 16000 & 0.29727 & 0.28966 & 0.29574 & 0.28594 & 0.29924 \\
& 32000 & 0.30304 & 0.29397 & 0.30349 & 0.28986 & 0.30561 \\
\hline
\multirow{8}{*}{Recall@20} 
& 0 & 0.33434 & 0.33434 & 0.33434 & 0.33434 & 0.33434 \\
& 500 & 0.36105 & 0.35785 & 0.35804 & 0.35711 & 0.36212 \\
& 1000 & 0.36231 & 0.35973 & 0.3617 & 0.35593 & 0.36389 \\
& 2000 & 0.36339 & 0.36408 & 0.36735 & 0.35766 & 0.3686 \\
& 4000 & 0.37043 & 0.36911 & 0.3716 & 0.36253 & 0.37939 \\
& 8000 & 0.38016 & 0.37795 & 0.37853 & 0.36541 & 0.38278 \\
& 16000 & 0.38796 & 0.38256 & 0.38589 & 0.36692 & 0.39138 \\
& 32000 & 0.39753 & 0.39027 & 0.39605 & 0.37266 & 0.39998 \\
\hline
\multirow{8}{*}{Success@20} 
& 0 & 0.6383 & 0.6383 & 0.6383 & 0.6383 & 0.6383 \\
& 500 & 0.67915 & 0.66468 & 0.66809 & 0.67149 & 0.67574 \\
& 1000 & 0.67404 & 0.66553 & 0.67149 & 0.67319 & 0.6834 \\
& 2000 & 0.67745 & 0.6766 & 0.67915 & 0.67404 & 0.6834 \\
& 4000 & 0.69021 & 0.68681 & 0.69106 & 0.68 & 0.69872 \\
& 8000 & 0.69447 & 0.69702 & 0.69277 & 0.68851 & 0.70043 \\
& 16000 & 0.70468 & 0.70043 & 0.70128 & 0.69362 & 0.71064 \\
& 32000 & 0.7183 & 0.71319 & 0.71319 & 0.69191 & 0.72 \\
\hline
\multirow{8}{*}{NDCG@100} 
& 0 & 0.36876 & 0.36876 & 0.36876 & 0.36876 & 0.36876 \\
& 500 & 0.38618 & 0.38505 & 0.38477 & 0.3857 & 0.38506 \\
& 1000 & 0.38433 & 0.38485 & 0.38481 & 0.38712 & 0.38806 \\
& 2000 & 0.38651 & 0.38577 & 0.38586 & 0.39046 & 0.3905 \\
& 4000 & 0.39201 & 0.38943 & 0.39245 & 0.39179 & 0.39496 \\
& 8000 & 0.3995 & 0.39603 & 0.40126 & 0.39208 & 0.40289 \\
& 16000 & 0.41015 & 0.40399 & 0.41008 & 0.39419 & 0.4117 \\
& 32000 & 0.41662 & 0.40836 & 0.41801 & 0.39733 & 0.41867 \\
\hline
\multirow{8}{*}{Recall@100} 
& 0 & 0.60293 & 0.60293 & 0.60293 & 0.60293 & 0.60293 \\
& 500 & 0.64261 & 0.64128 & 0.64145 & 0.63196 & 0.64077 \\
& 1000 & 0.64984 & 0.64984 & 0.64989 & 0.63542 & 0.64843 \\
& 2000 & 0.66019 & 0.65785 & 0.65856 & 0.6394 & 0.66177 \\
& 4000 & 0.66304 & 0.66648 & 0.66787 & 0.6402 & 0.66628 \\
& 8000 & 0.67083 & 0.67221 & 0.6732 & 0.64336 & 0.67385 \\
& 16000 & 0.68717 & 0.68287 & 0.6885 & 0.64703 & 0.68699 \\
& 32000 & 0.69987 & 0.68954 & 0.70005 & 0.64903 & 0.69874 \\
\hline
\multirow{8}{*}{Success@100} 
& 0 & 0.83915 & 0.83915 & 0.83915 & 0.83915 & 0.83915 \\
& 500 & 0.85957 & 0.86128 & 0.85872 & 0.85957 & 0.85617 \\
& 1000 & 0.86383 & 0.86553 & 0.86638 & 0.86128 & 0.86213 \\
& 2000 & 0.87234 & 0.87234 & 0.87234 & 0.86213 & 0.87149 \\
& 4000 & 0.8783 & 0.88085 & 0.88085 & 0.86213 & 0.8766 \\
& 8000 & 0.88085 & 0.8834 & 0.8817 & 0.86298 & 0.88426 \\
& 16000 & 0.88766 & 0.88681 & 0.89021 & 0.86553 & 0.89021 \\
& 32000 & 0.89787 & 0.89362 & 0.89617 & 0.86894 & 0.89362 \\
\hline
\end{tabular}
}
\caption{Full Results on TripClick (torso).}
\label{tab:tripclick_torso_full}
\end{table}
\begin{table}
\centering
\resizebox{0.7\columnwidth}{!}{
\begin{tabular}{lccccccc}
\hline
Metric & \#iterations & FT & RP25 & InfoBatch & SP25 & DP \\
\hline
\multirow{8}{*}{NDCG@10} 
& 0 & 0.86791 & 0.86791 & 0.86791 & 0.86791 & 0.86791 \\
& 500 & 0.87421 & 0.85787 & 0.87533 & 0.89561 & 0.87832 \\
& 1000 & 0.87297 & 0.85656 & 0.87007 & 0.89834 & 0.88468 \\
& 2000 & 0.87564 & 0.86304 & 0.87542 & 0.90043 & 0.88611 \\
& 4000 & 0.88167 & 0.87094 & 0.88168 & 0.90521 & 0.89146 \\
& 8000 & 0.88552 & 0.87471 & 0.88762 & 0.90846 & 0.89684 \\
& 16000 & 0.8915 & 0.87811 & 0.89261 & 0.911 & 0.90002 \\
& 32000 & 0.892 & 0.87517 & 0.89319 & 0.91485 & 0.90247 \\
\hline
\multirow{8}{*}{Recall@10} 
& 0 & 0.93622 & 0.93622 & 0.93622 & 0.93622 & 0.93622 \\
& 500 & 0.94158 & 0.93736 & 0.94238 & 0.94385 & 0.94226 \\
& 1000 & 0.94305 & 0.93727 & 0.94163 & 0.94312 & 0.94523 \\
& 2000 & 0.94305 & 0.9397 & 0.9424 & 0.94194 & 0.94457 \\
& 4000 & 0.94484 & 0.9412 & 0.9444 & 0.9424 & 0.9459 \\
& 8000 & 0.94635 & 0.94284 & 0.94723 & 0.9434 & 0.94892 \\
& 16000 & 0.94806 & 0.94366 & 0.9481 & 0.94293 & 0.95087 \\
& 32000 & 0.9486 & 0.94342 & 0.94868 & 0.93992 & 0.94965 \\
\hline
\multirow{8}{*}{Success@10} 
& 0 & 0.97735 & 0.97735 & 0.97735 & 0.97735 & 0.97735 \\
& 500 & 0.98305 & 0.97705 & 0.9835 & 0.98755 & 0.98455 \\
& 1000 & 0.98305 & 0.9772 & 0.98185 & 0.988 & 0.9868 \\
& 2000 & 0.9829 & 0.979 & 0.9826 & 0.98785 & 0.98545 \\
& 4000 & 0.9847 & 0.9799 & 0.9838 & 0.98875 & 0.9865 \\
& 8000 & 0.985 & 0.9805 & 0.9859 & 0.98965 & 0.9883 \\
& 16000 & 0.98515 & 0.9802 & 0.9853 & 0.98935 & 0.9892 \\
& 32000 & 0.9838 & 0.9799 & 0.98455 & 0.988 & 0.98695 \\
\hline
\multirow{8}{*}{NDCG@20} 
& 0 & 0.8723 & 0.8723 & 0.8723 & 0.8723 & 0.8723 \\
& 500 & 0.87815 & 0.86211 & 0.87925 & 0.89912 & 0.88211 \\
& 1000 & 0.87689 & 0.86165 & 0.87416 & 0.90187 & 0.88805 \\
& 2000 & 0.88005 & 0.86781 & 0.87991 & 0.90409 & 0.89001 \\
& 4000 & 0.88609 & 0.87588 & 0.88641 & 0.90866 & 0.89527 \\
& 8000 & 0.88992 & 0.87945 & 0.89184 & 0.91165 & 0.90086 \\
& 16000 & 0.89561 & 0.88274 & 0.89676 & 0.91402 & 0.90385 \\
& 32000 & 0.89564 & 0.87916 & 0.89703 & 0.91796 & 0.90634 \\
\hline
\multirow{8}{*}{Recall@20} 
& 0 & 0.95048 & 0.95048 & 0.95048 & 0.95048 & 0.95048 \\
& 500 & 0.95453 & 0.95144 & 0.95523 & 0.95469 & 0.95459 \\
& 1000 & 0.95569 & 0.95407 & 0.95477 & 0.95399 & 0.95601 \\
& 2000 & 0.95721 & 0.95493 & 0.95659 & 0.95319 & 0.95708 \\
& 4000 & 0.95903 & 0.95695 & 0.9596 & 0.95303 & 0.95791 \\
& 8000 & 0.96049 & 0.95829 & 0.96069 & 0.95324 & 0.96156 \\
& 16000 & 0.96127 & 0.95902 & 0.96142 & 0.95232 & 0.96278 \\
& 32000 & 0.96026 & 0.95607 & 0.96086 & 0.94967 & 0.96195 \\
\hline
\multirow{8}{*}{Success@20} 
& 0 & 0.9859 & 0.9859 & 0.9859 & 0.9859 & 0.9859 \\
& 500 & 0.9901 & 0.98635 & 0.9904 & 0.9925 & 0.9904 \\
& 1000 & 0.9898 & 0.98755 & 0.9895 & 0.99265 & 0.99175 \\
& 2000 & 0.9907 & 0.98755 & 0.9898 & 0.9928 & 0.99175 \\
& 4000 & 0.99145 & 0.98845 & 0.99205 & 0.9934 & 0.99205 \\
& 8000 & 0.9925 & 0.9892 & 0.9922 & 0.9937 & 0.99355 \\
& 16000 & 0.9919 & 0.9895 & 0.99175 & 0.99355 & 0.99355 \\
& 32000 & 0.98965 & 0.98605 & 0.9901 & 0.99295 & 0.99235 \\
\hline
\multirow{8}{*}{NDCG@100} 
& 0 & 0.87636 & 0.87636 & 0.87636 & 0.87636 & 0.87636 \\
& 500 & 0.88261 & 0.86725 & 0.88371 & 0.90305 & 0.88639 \\
& 1000 & 0.88132 & 0.8665 & 0.87887 & 0.90575 & 0.89229 \\
& 2000 & 0.88435 & 0.87259 & 0.88431 & 0.90809 & 0.89427 \\
& 4000 & 0.89035 & 0.88029 & 0.89042 & 0.91256 & 0.89958 \\
& 8000 & 0.89399 & 0.88387 & 0.89579 & 0.91537 & 0.90469 \\
& 16000 & 0.89971 & 0.88716 & 0.90076 & 0.91775 & 0.90758 \\
& 32000 & 0.89989 & 0.88413 & 0.90125 & 0.92182 & 0.91042 \\
\hline
\multirow{8}{*}{Recall@100} 
& 0 & 0.96671 & 0.96671 & 0.96671 & 0.96671 & 0.96671 \\
& 500 & 0.97215 & 0.97231 & 0.97283 & 0.97047 & 0.9714 \\
& 1000 & 0.97325 & 0.97381 & 0.97341 & 0.96951 & 0.97267 \\
& 2000 & 0.97417 & 0.97459 & 0.97396 & 0.96964 & 0.97374 \\
& 4000 & 0.97595 & 0.97473 & 0.97535 & 0.96888 & 0.97511 \\
& 8000 & 0.97648 & 0.97609 & 0.97618 & 0.96812 & 0.97659 \\
& 16000 & 0.97727 & 0.97661 & 0.97698 & 0.96719 & 0.97709 \\
& 32000 & 0.97751 & 0.9762 & 0.97807 & 0.96518 & 0.97812 \\
\hline
\multirow{8}{*}{Success@100} 
& 0 & 0.99205 & 0.99205 & 0.99205 & 0.99205 & 0.99205 \\
& 500 & 0.99565 & 0.9943 & 0.9958 & 0.99625 & 0.99565 \\
& 1000 & 0.9955 & 0.9949 & 0.99565 & 0.99655 & 0.9958 \\
& 2000 & 0.99505 & 0.99475 & 0.99505 & 0.99745 & 0.9958 \\
& 4000 & 0.9955 & 0.99445 & 0.9949 & 0.99745 & 0.9964 \\
& 8000 & 0.9958 & 0.99505 & 0.9955 & 0.99685 & 0.99655 \\
& 16000 & 0.9958 & 0.9949 & 0.9958 & 0.9967 & 0.9964 \\
& 32000 & 0.9958 & 0.99505 & 0.9961 & 0.9967 & 0.99685 \\
\hline
\end{tabular}
}
\caption{Full Results on FEVER.}
\label{tab:fever_full}
\end{table}
\begin{table}
\centering
\resizebox{0.7\columnwidth}{!}{
\begin{tabular}{lccccccc}
\hline
Metric & \#iterations & FT & RP25 & InfoBatch & SP25 & DP \\
\hline
\multirow{8}{*}{NDCG@10} 
& 0 & 0.78999 & 0.78999 & 0.78999 & 0.78999 & 0.78999 \\
& 500 & 0.79112 & 0.78914 & 0.78982 & 0.79181 & 0.79469 \\
& 1000 & 0.79348 & 0.7896 & 0.79108 & 0.79162 & 0.79549 \\
& 2000 & 0.79611 & 0.79243 & 0.79308 & 0.79228 & 0.79771 \\
& 4000 & 0.79654 & 0.79531 & 0.79701 & 0.79543 & 0.80125 \\
& 8000 & 0.80141 & 0.79758 & 0.80054 & 0.79639 & 0.80721 \\
& 16000 & 0.80386 & 0.79894 & 0.80298 & 0.80032 & 0.81139 \\
& 32000 & 0.80585 & 0.80083 & 0.80647 & 0.80275 & 0.81208 \\
\hline
\multirow{8}{*}{Recall@10} 
& 0 & 0.76583 & 0.76583 & 0.76583 & 0.76583 & 0.76583 \\
& 500 & 0.77434 & 0.77286 & 0.77286 & 0.74477 & 0.77056 \\
& 1000 & 0.77576 & 0.77441 & 0.77596 & 0.73943 & 0.774 \\
& 2000 & 0.77981 & 0.77873 & 0.7788 & 0.74099 & 0.77731 \\
& 4000 & 0.78379 & 0.7815 & 0.78325 & 0.74105 & 0.78325 \\
& 8000 & 0.78899 & 0.78379 & 0.78778 & 0.74038 & 0.78771 \\
& 16000 & 0.79433 & 0.78656 & 0.79318 & 0.74207 & 0.79365 \\
& 32000 & 0.80088 & 0.78623 & 0.80068 & 0.74362 & 0.79953 \\
\hline
\multirow{8}{*}{Success@10} 
& 0 & 0.95179 & 0.95179 & 0.95179 & 0.95179 & 0.95179 \\
& 500 & 0.95935 & 0.95881 & 0.95895 & 0.95422 & 0.96003 \\
& 1000 & 0.96057 & 0.95935 & 0.95989 & 0.95409 & 0.96057 \\
& 2000 & 0.96259 & 0.96003 & 0.96084 & 0.95409 & 0.96138 \\
& 4000 & 0.96138 & 0.95868 & 0.9603 & 0.95395 & 0.96502 \\
& 8000 & 0.96111 & 0.95841 & 0.96016 & 0.95449 & 0.96597 \\
& 16000 & 0.96003 & 0.95787 & 0.96003 & 0.95463 & 0.96637 \\
& 32000 & 0.96124 & 0.9576 & 0.96084 & 0.95544 & 0.96529 \\
\hline
\multirow{8}{*}{NDCG@20} 
& 0 & 0.80366 & 0.80366 & 0.80366 & 0.80366 & 0.80366 \\
& 500 & 0.80525 & 0.80333 & 0.80421 & 0.80514 & 0.80885 \\
& 1000 & 0.808 & 0.80409 & 0.80541 & 0.80501 & 0.80945 \\
& 2000 & 0.8101 & 0.80683 & 0.80703 & 0.80565 & 0.81199 \\
& 4000 & 0.81002 & 0.80926 & 0.81084 & 0.80884 & 0.81438 \\
& 8000 & 0.81462 & 0.81148 & 0.81433 & 0.80939 & 0.82042 \\
& 16000 & 0.81756 & 0.81293 & 0.81682 & 0.81337 & 0.82424 \\
& 32000 & 0.81933 & 0.81502 & 0.81977 & 0.81546 & 0.82478 \\
\hline
\multirow{8}{*}{Recall@20} 
& 0 & 0.80743 & 0.80743 & 0.80743 & 0.80743 & 0.80743 \\
& 500 & 0.81695 & 0.8154 & 0.81614 & 0.78521 & 0.81303 \\
& 1000 & 0.81938 & 0.81783 & 0.81924 & 0.78008 & 0.81594 \\
& 2000 & 0.82228 & 0.82208 & 0.821 & 0.7819 & 0.82019 \\
& 4000 & 0.82458 & 0.82316 & 0.82492 & 0.78217 & 0.82357 \\
& 8000 & 0.82876 & 0.82606 & 0.82924 & 0.78015 & 0.82829 \\
& 16000 & 0.83565 & 0.82917 & 0.83477 & 0.78143 & 0.83268 \\
& 32000 & 0.84159 & 0.82937 & 0.84126 & 0.78204 & 0.83808 \\
\hline
\multirow{8}{*}{Success@20} 
& 0 & 0.96705 & 0.96705 & 0.96705 & 0.96705 & 0.96705 \\
& 500 & 0.97434 & 0.97407 & 0.97394 & 0.96772 & 0.97556 \\
& 1000 & 0.97596 & 0.97529 & 0.97434 & 0.96718 & 0.97556 \\
& 2000 & 0.97623 & 0.97515 & 0.97448 & 0.96678 & 0.97637 \\
& 4000 & 0.97556 & 0.97448 & 0.97556 & 0.96691 & 0.97623 \\
& 8000 & 0.97529 & 0.97218 & 0.97583 & 0.96691 & 0.97826 \\
& 16000 & 0.97637 & 0.97259 & 0.97569 & 0.9684 & 0.97893 \\
& 32000 & 0.97691 & 0.97205 & 0.97569 & 0.96907 & 0.97812 \\
\hline
\multirow{8}{*}{NDCG@100} 
& 0 & 0.82198 & 0.82198 & 0.82198 & 0.82198 & 0.82198 \\
& 500 & 0.82227 & 0.82079 & 0.82131 & 0.82394 & 0.82592 \\
& 1000 & 0.82477 & 0.8212 & 0.82234 & 0.82389 & 0.82651 \\
& 2000 & 0.8265 & 0.82336 & 0.82383 & 0.82444 & 0.82881 \\
& 4000 & 0.82664 & 0.82613 & 0.82728 & 0.82724 & 0.83099 \\
& 8000 & 0.83127 & 0.82852 & 0.83067 & 0.828 & 0.83651 \\
& 16000 & 0.8334 & 0.82963 & 0.83301 & 0.83107 & 0.84023 \\
& 32000 & 0.83481 & 0.83163 & 0.83551 & 0.8332 & 0.84056 \\
\hline
\multirow{8}{*}{Recall@100} 
& 0 & 0.88292 & 0.88292 & 0.88292 & 0.88292 & 0.88292 \\
& 500 & 0.88785 & 0.88805 & 0.88717 & 0.86313 & 0.88386 \\
& 1000 & 0.8892 & 0.88926 & 0.89001 & 0.8582 & 0.88724 \\
& 2000 & 0.89061 & 0.89129 & 0.89061 & 0.85976 & 0.89041 \\
& 4000 & 0.89318 & 0.89332 & 0.89352 & 0.85814 & 0.89271 \\
& 8000 & 0.89757 & 0.89608 & 0.89683 & 0.85658 & 0.89534 \\
& 16000 & 0.90095 & 0.89797 & 0.90149 & 0.85442 & 0.89878 \\
& 32000 & 0.90554 & 0.89764 & 0.90628 & 0.85571 & 0.90365 \\
\hline
\multirow{8}{*}{Success@100} 
& 0 & 0.98825 & 0.98825 & 0.98825 & 0.98825 & 0.98825 \\
& 500 & 0.99082 & 0.99095 & 0.99068 & 0.98825 & 0.99122 \\
& 1000 & 0.99149 & 0.99122 & 0.99055 & 0.98798 & 0.99109 \\
& 2000 & 0.99149 & 0.99109 & 0.99176 & 0.98785 & 0.9919 \\
& 4000 & 0.99163 & 0.99149 & 0.99176 & 0.98866 & 0.99203 \\
& 8000 & 0.9919 & 0.99217 & 0.99082 & 0.98866 & 0.99203 \\
& 16000 & 0.99203 & 0.9919 & 0.99203 & 0.98866 & 0.99298 \\
& 32000 & 0.99257 & 0.99109 & 0.99203 & 0.98879 & 0.9923 \\
\hline
\end{tabular}
}
\caption{Full Results on HotpotQA.}
\label{tab:hotpotqa_full}
\end{table}


\begin{table}[ht]
\centering
\caption{Qwen3-Embedding-0.6B on NFCorpus.}
\label{tab:qwen3_detail_nfcorpus}
\resizebox{0.95\columnwidth}{!}{
\begin{tabular}{@{}lccccc@{}}
\toprule
\textbf{Metric} & \textbf{Pretrained} & \textbf{FT} & \textbf{InfoBatch} & \textbf{SP} & \textbf{DP} \\
\midrule
MRR{@}1 & 0.458 & 0.502 & 0.511 & 0.533 & 0.517 \\
MRR{@}5 & 0.543 & 0.572 & 0.580 & 0.595 & 0.587 \\
MRR{@}10 & 0.550 & 0.581 & 0.587 & 0.605 & 0.595 \\
MRR{@}20 & 0.555 & 0.584 & 0.590 & 0.607 & 0.598 \\
MRR{@}50 & 0.556 & 0.586 & 0.591 & 0.609 & 0.600 \\
MRR{@}100 & 0.557 & 0.586 & 0.592 & 0.609 & 0.600 \\
\midrule
NDCG{@}5 & 0.440 & 0.479 & 0.480 & 0.492 & 0.486 \\
NDCG{@}10 & 0.441 & 0.479 & 0.478 & 0.487 & 0.479 \\
NDCG{@}20 & 0.461 & 0.489 & 0.484 & 0.489 & 0.492 \\
NDCG{@}50 & 0.517 & 0.539 & 0.540 & 0.545 & 0.541 \\
NDCG{@}100 & 0.577 & 0.617 & 0.617 & 0.615 & 0.618 \\
\midrule
Recall{@}1 & 0.061 & 0.065 & 0.068 & 0.067 & 0.065 \\
Recall{@}5 & 0.136 & 0.162 & 0.165 & 0.155 & 0.169 \\
Recall{@}10 & 0.171 & 0.236 & 0.237 & 0.214 & 0.234 \\
Recall{@}20 & 0.211 & 0.314 & 0.308 & 0.265 & 0.311 \\
Recall{@}50 & 0.267 & 0.416 & 0.417 & 0.357 & 0.408 \\
Recall{@}100 & 0.329 & 0.518 & 0.517 & 0.443 & 0.502 \\
\midrule
Success{@}1 & 0.458 & 0.502 & 0.511 & 0.533 & 0.517 \\
Success{@}5 & 0.672 & 0.690 & 0.697 & 0.693 & 0.697 \\
Success{@}10 & 0.721 & 0.755 & 0.749 & 0.768 & 0.762 \\
Success{@}20 & 0.793 & 0.796 & 0.796 & 0.793 & 0.805 \\
Success{@}50 & 0.836 & 0.842 & 0.845 & 0.851 & 0.864 \\
Success{@}100 & 0.889 & 0.889 & 0.885 & 0.885 & 0.892 \\
\bottomrule
\end{tabular}
}
\end{table}

\begin{table}[ht]
\centering
\caption{Qwen3-Embedding-0.6B on FiQA. The pretrained model outperforms all finetuned methods, likely due to high-quality financial data in its pretraining corpus.}
\label{tab:qwen3_detail_fiqa}
\resizebox{0.95\columnwidth}{!}{
\begin{tabular}{@{}lccccc@{}}
\toprule
\textbf{Metric} & \textbf{Pretrained} & \textbf{FT} & \textbf{InfoBatch} & \textbf{SP} & \textbf{DP} \\
\midrule
MRR{@}1 & 0.469 & 0.392 & 0.394 & 0.390 & 0.384 \\
MRR{@}5 & 0.543 & 0.465 & 0.466 & 0.464 & 0.464 \\
MRR{@}10 & 0.555 & 0.474 & 0.473 & 0.473 & 0.471 \\
MRR{@}20 & 0.559 & 0.479 & 0.478 & 0.477 & 0.476 \\
MRR{@}50 & 0.561 & 0.480 & 0.480 & 0.480 & 0.478 \\
MRR{@}100 & 0.561 & 0.481 & 0.481 & 0.480 & 0.479 \\
\midrule
NDCG{@}5 & 0.469 & 0.420 & 0.428 & 0.423 & 0.423 \\
NDCG{@}10 & 0.511 & 0.454 & 0.455 & 0.456 & 0.450 \\
NDCG{@}20 & 0.540 & 0.483 & 0.484 & 0.478 & 0.476 \\
NDCG{@}50 & 0.568 & 0.506 & 0.508 & 0.506 & 0.501 \\
NDCG{@}100 & 0.587 & 0.517 & 0.520 & 0.522 & 0.515 \\
\midrule
Recall{@}1 & 0.245 & 0.199 & 0.197 & 0.205 & 0.192 \\
Recall{@}5 & 0.453 & 0.377 & 0.384 & 0.377 & 0.377 \\
Recall{@}10 & 0.552 & 0.446 & 0.439 & 0.448 & 0.432 \\
Recall{@}20 & 0.633 & 0.510 & 0.508 & 0.504 & 0.495 \\
Recall{@}50 & 0.724 & 0.574 & 0.583 & 0.584 & 0.565 \\
Recall{@}100 & 0.798 & 0.612 & 0.623 & 0.634 & 0.611 \\
\midrule
Success{@}1 & 0.469 & 0.392 & 0.394 & 0.390 & 0.384 \\
Success{@}5 & 0.674 & 0.586 & 0.596 & 0.590 & 0.583 \\
Success{@}10 & 0.761 & 0.657 & 0.644 & 0.656 & 0.634 \\
Success{@}20 & 0.821 & 0.715 & 0.728 & 0.713 & 0.707 \\
Success{@}50 & 0.886 & 0.767 & 0.781 & 0.781 & 0.769 \\
Success{@}100 & 0.927 & 0.802 & 0.816 & 0.832 & 0.802 \\
\bottomrule
\end{tabular}
}
\end{table}

\begin{table}[ht]
\centering
\caption{Qwen3-Embedding-0.6B on ANTIQUE.}
\label{tab:qwen3_detail_antique}
\resizebox{0.95\columnwidth}{!}{
\begin{tabular}{@{}lccccc@{}}
\toprule
\textbf{Metric} & \textbf{Pretrained} & \textbf{FT} & \textbf{InfoBatch} & \textbf{SP} & \textbf{DP} \\
\midrule
MRR{@}1 & 0.615 & 0.580 & 0.625 & 0.640 & 0.615 \\
MRR{@}5 & 0.718 & 0.682 & 0.707 & 0.738 & 0.696 \\
MRR{@}10 & 0.724 & 0.689 & 0.716 & 0.746 & 0.703 \\
MRR{@}20 & 0.727 & 0.692 & 0.718 & 0.747 & 0.706 \\
MRR{@}50 & 0.728 & 0.693 & 0.719 & 0.748 & 0.707 \\
MRR{@}100 & 0.728 & 0.693 & 0.719 & 0.748 & 0.707 \\
\midrule
NDCG{@}5 & 0.519 & 0.496 & 0.502 & 0.546 & 0.510 \\
NDCG{@}10 & 0.518 & 0.496 & 0.506 & 0.540 & 0.520 \\
NDCG{@}20 & 0.570 & 0.546 & 0.549 & 0.582 & 0.567 \\
NDCG{@}50 & 0.646 & 0.617 & 0.624 & 0.658 & 0.636 \\
NDCG{@}100 & 0.690 & 0.663 & 0.665 & 0.707 & 0.674 \\
\midrule
Recall{@}1 & 0.072 & 0.071 & 0.069 & 0.074 & 0.074 \\
Recall{@}5 & 0.214 & 0.202 & 0.197 & 0.231 & 0.207 \\
Recall{@}10 & 0.304 & 0.270 & 0.264 & 0.317 & 0.284 \\
Recall{@}20 & 0.398 & 0.340 & 0.329 & 0.396 & 0.353 \\
Recall{@}50 & 0.510 & 0.444 & 0.431 & 0.514 & 0.450 \\
Recall{@}100 & 0.585 & 0.516 & 0.495 & 0.597 & 0.513 \\
\midrule
Success{@}1 & 0.615 & 0.580 & 0.625 & 0.640 & 0.615 \\
Success{@}5 & 0.855 & 0.835 & 0.840 & 0.885 & 0.820 \\
Success{@}10 & 0.905 & 0.890 & 0.905 & 0.935 & 0.870 \\
Success{@}20 & 0.940 & 0.930 & 0.930 & 0.950 & 0.915 \\
Success{@}50 & 0.980 & 0.955 & 0.950 & 0.975 & 0.950 \\
Success{@}100 & 0.980 & 0.965 & 0.955 & 0.980 & 0.955 \\
\bottomrule
\end{tabular}
}
\end{table}

\begin{table}[ht]
\centering
\caption{Qwen3-Embedding-0.6B on TriviaQA.}
\label{tab:qwen3_detail_triviaqa}
\resizebox{0.95\columnwidth}{!}{
\begin{tabular}{@{}lccccc@{}}
\toprule
\textbf{Metric} & \textbf{Pretrained} & \textbf{FT} & \textbf{InfoBatch} & \textbf{SP} & \textbf{DP} \\
\midrule
MRR{@}1 & 0.519 & 0.519 & 0.517 & 0.538 & 0.547 \\
MRR{@}5 & 0.602 & 0.615 & 0.611 & 0.620 & 0.634 \\
MRR{@}10 & 0.611 & 0.624 & 0.619 & 0.627 & 0.642 \\
MRR{@}20 & 0.614 & 0.627 & 0.622 & 0.630 & 0.645 \\
MRR{@}50 & 0.616 & 0.628 & 0.624 & 0.631 & 0.646 \\
MRR{@}100 & 0.616 & 0.628 & 0.624 & 0.631 & 0.647 \\
\midrule
NDCG{@}5 & 0.461 & 0.483 & 0.478 & 0.489 & 0.501 \\
NDCG{@}10 & 0.467 & 0.489 & 0.484 & 0.493 & 0.504 \\
NDCG{@}20 & 0.502 & 0.522 & 0.515 & 0.525 & 0.535 \\
NDCG{@}50 & 0.564 & 0.587 & 0.582 & 0.584 & 0.598 \\
NDCG{@}100 & 0.605 & 0.632 & 0.626 & 0.623 & 0.640 \\
\midrule
Recall{@}1 & 0.086 & 0.094 & 0.093 & 0.093 & 0.097 \\
Recall{@}5 & 0.236 & 0.267 & 0.263 & 0.253 & 0.271 \\
Recall{@}10 & 0.316 & 0.360 & 0.354 & 0.334 & 0.363 \\
Recall{@}20 & 0.403 & 0.462 & 0.453 & 0.421 & 0.461 \\
Recall{@}50 & 0.514 & 0.590 & 0.584 & 0.529 & 0.588 \\
Recall{@}100 & 0.591 & 0.677 & 0.669 & 0.603 & 0.669 \\
\midrule
Success{@}1 & 0.519 & 0.519 & 0.517 & 0.538 & 0.547 \\
Success{@}5 & 0.735 & 0.766 & 0.759 & 0.752 & 0.766 \\
Success{@}10 & 0.795 & 0.826 & 0.821 & 0.803 & 0.826 \\
Success{@}20 & 0.844 & 0.871 & 0.864 & 0.848 & 0.867 \\
Success{@}50 & 0.886 & 0.912 & 0.905 & 0.888 & 0.912 \\
Success{@}100 & 0.911 & 0.935 & 0.927 & 0.911 & 0.931 \\
\bottomrule
\end{tabular}
}
\end{table}

\end{document}